\title{Orchestrated Session Compliance
\thanks{This work was partially supported by COST Action IC1201 BETTY, MIUR PRIN
Project CINA Prot.\ 2010LHT4KM and Torino University/Compagnia San Paolo Project SALT.}
}
\author{Franco Barbanera
\institute{Dipartimento di Matematica e Informatica\\
University of Catania}
\email{barba@dmi.unict.it}
 \and 
Steffen van Bakel
\institute{Department of Computing \\
Imperial College London}
\email{svanbakel@imperial.ac.uk}
 \and 
Ugo de'Liguoro
\institute{Dipartimento di Informatica\\
University of Torino}
\email{ugo.deliguoro@unito.it}
}
\begin{document}

\setlength{\abovedisplayskip}{6pt}
\setlength{\belowdisplayskip}{\abovedisplayskip}

\maketitle

\begin{abstract}
We investigate the notion of orchestrated compliance for client/server interactions in the context of {\em session contracts}. 
Devising the notion of orchestrator in such a context makes it possible to have 
orchestrators with unbounded buffering capabilities and at the same time to guarantee any message from the client to be eventually delivered by the orchestrator to the server, 
while preventing the server from
sending messages which are kept indefinitely inside the orchestrator. The compliance relation is shown to be decidable by means of \emph{1}) a procedure synthesising the orchestrators, if any, making a client compliant with a server, and \emph{2}) a procedure for deciding whether an orchestrator behaves in a proper way as mentioned before.
\end{abstract}

\section{Introduction}
Session types and contracts are two formalisms used to study client/server protocols.
Session types have been introduced in \cite{honda.vasconcelos.kubo:language-primitives} as a tool for statically checking safe message exchanges through channels.
Contracts, on the other hand, as proposed in \cite{CCLP06,LP07,CGP10}, are a subset of CCS without $\tau$, that address the problem of abstractly describing behavioural properties of systems by means of process algebra.
In between these two formalisms lie {\em session contracts}\footnote{They were dubbed {\em session behaviours} in \cite{BdL10,BdL13}. For sake of uniformity and since {\em session contract} sounds more appealing, we adhere here to this name.} 
as introduced in~\cite{BdL10,BdL13,HennessyB12,BH13}; this is a formalism interpreting the session types into a subset of contracts.

In the theory of contracts, as well as in the formalism of session contracts, the notion of {\em compliance} plays a central role. 
A client $\rho$ is defined as being compliant with a server $\sigma$ (written as $\rho\comply\sigma$) whenever \emph{all} of its {\em requests} are satisfied by the server. 
Now it might be the case that client satisfaction cannot be achieved just because of a difference in the order in which the partners exchange information, or because one of them provide some extra un-needed information.

Consider the example of a meteorological data processing system (MDPS) that is permanently connected to a weather station to which it sends, according to its processing needs, particular data requests. 
For the sake of simplicity, we consider just two particular requests, namely for \emph{temperature} and \emph{humidity}. 
After the requests, the MDPS expects to receive the data in the order they were requested.
In the session-contracts formalism the interface for the simplified MDPS can be stated as follows:
 \[ \begin{array}{rcl}
\textsf{MDPS} &=& \rec x \procdot \Dual{\tt tempReq}\Actdot \Dual{\tt humReq}\Actdot {\tt temperature}\Actdot {\tt humidity}\Actdot x
 \end{array} \]
(Here, as in CCS, a symbol like `${\tt a}$' stands for on input action, whereas `${\tt \Dual{a}}$' denotes the corresponding output).
We assume a weather station to be able to send back the asked-for information in the order decided by its sensors, interspersed with information about \emph{wind speed}:
 \[ \begin{array}{rcl@{}l}
\textsf{WeatherStation} &=& \rec x \procdot {\tt tempReq} \Actdot {\tt humReq} \Actdot ( 
	& \Dual{\tt temperature} \Actdot \Dual{\tt humidity} \Actdot \Dual{\tt wind} \Actdot x \\
	&&& \oplus \\
	&&& \Dual{\tt humidity} \Actdot \Dual{\tt temperature} \Actdot \Dual{\tt wind} \Actdot x) 
 \end{array} \]

With the standard notion of compliance, it is not difficult to check that 
 $ \begin{array}{@{\,}r@{~}c@{~}l@{\,}}
\textsf{MDPS} &\not\comply& \textsf{WeatherStation}
 \end{array} $,
since the client \textsf{MDPS} has no input action for the wind data, and also since it could occur that the temperature and humidity data are delivered in 
a different order than expected by the \textsf{MDPS}.

A natural solution to this would consist of devising a process that acts as a mediator (here called \emph{orchestrator}) between the client and the server, coordinating them in a centralised way in order to make them compliant.
This sort of solution is the one adopted in the practice of web-service interaction, in particular for business processes, where the notion of orchestration has been introduced and developed:
 \begin{directquote}[\cite{OrchChor03}]
{\em Orchestration}:
Refers to an executable business process that may interact with both internal and external web services.
Orchestration describes how web services can interact at the message level, including
the business logic and execution order of the interactions. 
 \end{directquote}
In the context of the theory of contracts, this solution was formalised and investigated by Padovani \cite{Padovani10}, where orchestrators are processes that cannot affect the internal decisions of the client nor of the server, but can affect the way their synchronisation is carried out.

The orchestrating actions an orchestrator can perform have the following forms:
 \begin{description}

 \item[$\orchAct{\tt a}{\Dual{\tt a}}$ (resp.~$\orchAct{\Dual{\tt a}}{\tt a}$)]
the orchestrator gets ${\tt a}$ from the client (resp.~server) and immediately delivers it to the server (resp.~client) in a synchronous way.

 \item[$\orchAct{\tt a}{\varepsilon}$ (resp.~$\orchAct{\varepsilon}{\tt a}$)] 
the orchestrator gets ${\tt a}$ from the client (resp.~server) and stores it in the buffer.

 \item[$\orchAct{\Dual{\tt a}}{\varepsilon}$ (resp.~$\orchAct{\varepsilon}{\Dual{\tt a}}$)] the orchestrator takes ${\tt a}$ from the buffer and sends it to the client (resp.~server).

 \end{description}

So a possible orchestrator enabling compliance for our example would be
 \[ \begin{array}{rcl@{}l}
\textsf{f} &=& \rec x \procdot \orchAct{\tt tR}{\Dual{\tt tR}} \Actdot \orchAct{\tt hR}{\Dual{\tt hR}} \Actdot ( 
	& \orchAct{\Dual{\tt t}}{\tt t} \Actdot \orchAct{\Dual{\tt h}}{\tt h} \Actdot \orchAct{\varepsilon}{\tt w} \Actdot x \\
	&&& \vee \\
	&&& \orchAct{\varepsilon}{\tt h} \Actdot \orchAct{\Dual{\tt t}}{\tt t} \Actdot \orchAct{\Dual{\tt h}}{\varepsilon}\Actdot \orchAct{\varepsilon}{\tt w} \Actdot x) 
\end{array} \]
where {\tt tR}, {\tt hR}, {\tt t}, {\tt h}, and {\tt w} stand for {\tt tempReq}, {\tt humReq}, {\tt temperature}, {\tt humidity}, and {\tt wind}, respectively. 
The orchestrator \textsf{f} rearranges the order of messages when necessary, and {\em retains} the wind information, not needed by \textsf{MDPS}.

Actually, the orchestrator \textsf{f} is not a valid orchestrator in the sense of \cite{Padovani10}: indeed the ${\tt wind}$ information is never delivered to the client (i.e.~it is implicitly discarded), so that the buffer corresponding to \textsf{f} would be unbounded. 
Unbounded buffers are not allowed in \cite{Padovani10},
where boundedness of buffers is used to guarantee both decidability and the possibility of synthesising orchestrators. 
In a session setting instead, as is the present one, decidability and orchestrators synthesis can be established even in presence of unbounded buffering capabilities of orchestrators.

In a two-parties session-based interaction, the choice among several continuations always depends on just one of the two actors. To let our formalism fully adhere to such a viewpoint
our {\em session orchestrators}, besides (as argued in \cite{Padovani10}) being processes that cannot affect the internal decision of the client or the server, are such that they do not create any non-determinism besides that already present in the partners. 
This will correspond to restricting the syntax in such a way that orchestrators like, for instance,
$\orchAct {\varepsilon} {\tt a} \Actdot \textsf{f}_1 \vee \orchAct{\tt b}{\varepsilon} \Actdot \textsf{f}_2$, 
are not allowed. In fact, in the latter orchestrator, the choice of receiving an input from the client or from the server would not depend solely on the partners.
The \textsf{f} described above does respect this syntax restriction. 

Moreover, in our system it will be possible to prove that
 $
~~~\textsf{f} : \textsf{MDPS} \complyP \textsf{WeatherStation}~~~
 $
i.e.: \textsf{MDPS} and \textsf{WeatherStation} manage to be compliant (represented by $\complyP$ in our context) when their interaction is mediated by \textsf{f}.
In our system we will also manage to prevent the presence of \emph{fake orchestrated complying interactions}, like that between the client $\Dual{\tt a}\,{\tt .}\,\Dual{\tt b}$ and the server {\tt a} through the orchestrator $\orchAct{\tt a}{\Dual{\tt a}}\Actdot \orchAct{\tt b}{\varepsilon}$.
In this case the client gets the illusion that {\em all its requests} are satisfied, whereas its output {\tt b} never reaches the server, but will be indefinitely kept {\em inside} the orchestrator's buffer. 
While in the contract setting of \cite{Padovani10} such compliant interactions are allowed, in our session context we manage to rule out orchestrators behaving like $\orchAct{\tt a}{\tt \Dual{a}} \Actdot \orchAct{\tt b}{\varepsilon}$, which never deliver a message from the client to the server.

We shall prove that properties like the one just mentioned, characterising {\em well-behaved} orchestrators, are decidable.
Given an \textsf{f}, decidability of orchestrated compliance through \textsf{f} will be proved.
We will also show that, given a client and a server, it is possible to synthesise \emph{all} the orchestrators that make the client and system compliant, if any. 

\section{Session contracts and orchestrated compliance}
\label{sec:syntax}

{\em Session contracts}  are a restriction of {\em contracts} \cite{LP07,CGP10}. They are designed to be in one-to-one correspondence to session types \cite{honda.vasconcelos.kubo:language-primitives} without delegation (in \cite{BdL10,BdL13} a version with delegation was investigated). 
The restriction consists in constraining internal and external choices in a way that limits the non-determinism to (internal) output selection.

\begin{figure}[t]
\hrule
\vspace*{2mm}
\[ \begin{array}{lcl@{\hspace{4mm}}l}
\sigma,\rho & ::= & \stopA & \mbox{success} \\
 & \mid & a_1\Actdot\sigma_1 + \cdots + a_n\Actdot\sigma_n & \mbox{external choice} \\
 & \mid & \Dual{a}_1\Actdot\sigma_1 \oplus \cdots \oplus \Dual{a}_n\Actdot\sigma_n & \mbox{internal choice} \\
 & \mid & x & \mbox{variable} \\
 & \mid & \rec x \procdot  \sigma & \mbox{recursion}
\end{array} \]
\caption{The grammar of raw session-contracts}\label{fig:rbe-grammar}
\vspace*{2mm}
\hrule
\end{figure}

\begin{definition}[Session Contracts]\label{def:session-behaviours} 
\begin{enumerate}[i)]

 \item
Let $\Names$ be a countable set of symbols and~ $\CoNames = \Set{\Dual{a} \mid a \in \Names}$. 
The set $\Rbehav$ of {\em raw session contracts} is defined by the grammar in Figure~\ref{fig:rbe-grammar}, where:
\begin{itemize}

 \item 
for external and internal choices, $n \geq 1$, and $a_i \in\Names$ (hence $\Dual{a}_i\in\CoNames$) for all $1 \leq i\leq n$;
 
 \item 
the variable $x$ is a \emph{session-contract variable} out of a denumerable set; we consider occurrences of $x$ in $\sigma$ \emph{bound} in $\rec x \procdot  \sigma$.
An occurrence of $x$ in $\sigma$ is \emph{free} if it is not bound, and we write $\fv{\sigma}$ for the set of free variables in $\sigma$.
$\sigma$ is said to be {\em closed} whenever $\fv{\sigma}= \emptyset$.

 \end{itemize}

$\Act = \Names\cup\CoNames$ is the set of \emph{actions}.

\item
The set $\Sbehav$ of {\bf session contracts} is the subset of closed raw session contracts such that in \break\mbox{$a_1 \Actdot \sigma_1 + \cdots + a_n \Actdot \sigma_n$} ~and~ $\Dual{a}_1 \Actdot \sigma_1 \oplus \cdots \oplus \Dual{a}_n \Actdot \sigma_n$, the $a_i$ and the $\Dual{a}_i$ are, in both, pairwise distinct; moreover, in $\rec x \procdot \sigma$ the expression $\sigma$ is not a variable.

\end{enumerate}

\end{definition}

\noindent
As usual, we abbreviate $a_1\Actdot\sigma_1 + \cdots + a_n\Actdot\sigma_n$ by $\sum_{i=1}^n a_i\Actdot\sigma_i$, and $\Dual{a}_1\Actdot\sigma_1 \oplus \cdots \oplus \Dual{a}_n\Actdot\sigma_n$ by $ \bigoplus_{i=1}^n \Dual{a}_i\Actdot\sigma_i$. 
We also use the notations $\sum_{i\in I}a_i\Actdot\sigma_i$ and $\bigoplus_{i\in I} \Dual{a}_i\Actdot\sigma_i$ for finite and non-empty $I$.
We take the equi-recursive view of recursion, by equating $\rec x \procdot \sigma $ with $\sigma\Subst{x}{\rec x \procdot \sigma}$.

The trailing $\stopA$ is normally omitted: for example, we will write $a+b$ for $a \Actdot \stopA+ b \Actdot \stopA$. 
Session contracts will be considered modulo commutativity of internal and external choices.

The operational semantics of session contracts is given in terms of a labeled transition system (\LTS) 
$\sigma \lts{\alpha} \sigma'$ where $\sigma,\sigma'\in\Sbehav$ and $\alpha$ either belongs to a set of actions $\Act$ or is an internal action $\tau$.

\begin{definition}[\LTS\ for Session Contracts] \label{def:SB-red}~
We define the labelled transition system 
$(\Sbehav, \Act , \lts{})$ by 
 \[ \begin{array}{rcl@{\hspace{16mm}}rcl@{\hspace{16mm}}rcl}
\Dual{a}_1\Actdot\sigma_1 \oplus \cdots \oplus \Dual{a}_n\Actdot\sigma_n &\lts{\tau}& \Dual{a}_k\Actdot\sigma_k 
	&
\Dual{a}\Actdot\sigma &\lts{\Dual{a}}& \sigma 
	&
a_1\Actdot\sigma_1 + \cdots + a_n\Actdot\sigma_n &\lts{a_k}& \sigma_k
\end{array} \]
where $1 \leq k \leq n$, and $\sigma \lts{\alpha}\sigma'$ is short for $(\sigma,\alpha,\sigma') \in {\lts{}} $. We shall use $\lts{}$ as shorthand for $\lts{\tau}$.
As usual, we write $\Lts{} $ for $ \lts{}^*$ and $\Lts{\alpha} $ for $ \lts{}^* \lts{\alpha} \lts{}^*$ with $\alpha\in\Act$.
\end{definition}

Notice that reduction is not defined through contextual rules, so reduction only takes place at the `top' level.
Thereby, it is impossible for $\rec x \procdot  \sigma$ to unfold more than once without consuming a guard (remember that $\sigma$ is not a variable): so recursion is contractive in the usual sense.
We will safely assume that no two consecutive $\rec$ binders (as in $\rec x \procdot \rec y \procdot \sigma$) are present in a session contract. 

We observe that $\Lts{\alpha}$ is well defined, in that if $\sigma\in\Sbehav$ and $\sigma\Lts{\alpha}\sigma'$ (or $\sigma\Lts{}\sigma'$), then $\sigma'\in\Sbehav$.
\vspace{-6mm}
 \paragraph{Session orchestrators} 
As also done in \cite{Padovani10} in the context of the theory of contracts, we intend to investigate the notion of compliance when the interaction between a client and a server is mediated by an {\em orchestrator}.
Different from the broad contract setting, the session setting we are in induces some natural restrictions to the syntax of orchestrators, making it safe to have orchestrators with unbounded buffers. 
Moreover, it is possible to check whether any output from the client is eventually delivered by the orchestrator to the server, as well as whether there might be an infinite interaction which falsely progresses because it is made only of outputs from the server to the orchestrator
(see Section~\ref{sec:respdecidability}).

The set of actions an orchestrator can perform, that we take from \cite{Padovani10}, have been
informally described in the introduction.\footnote{One could wonder whether just asynchronous orchestration actions can be taken into account, since any $\orchAct{a}{\Dual{a}}$ action can be safely mimicked by two asynchronous ones, namely 
$\orchAct{a}{\varepsilon}.\orchAct{\varepsilon}{\Dual{a}}$ (similarly for $\orchAct{\Dual{a}}{a}$). A difference in fact would arise only for what concerns implementation, since the protocol for a 
synchronous exchange would not involve the use of a buffer, which is instead necessary for
asynchronous actions. Such an implementation issue seems unlikely to be related to our theoretical treatment. In contrast, we shall point out in Remark~\ref{rem:impl} how implementation related aspects might affect 
our formalisation.}

It can be reasonably argued that orchestrators must not show any internal non-determinism. 
Taking into account now the {\em session-based} interactions of our setting, such an assumption should be further extended, keeping in mind that in a session-based client/server interaction any possible non-determinism is due only to the internal non-determinism of the two partners. 
We therefore define our {\em session-orchestrators} so as to enforce this point of view.
It follows that the only choice we allow in session-orchestrators (represented by `$\vee$' in expressions like $f\vee g$) is an external one, and it is necessarily driven by the internal choice of one of the two partners. 
This implies that the actions immediately exhibited by $f$ and $g$ in an orchestrator like $f\vee g$ must have the same {\em direction}, i.e.~must belong to just one of the two subsets $\Set{ \orchAct{a}{\varepsilon}, \orchAct{a}{\Dual{a}}\mid a\in\Names } $ or $\Set{ \orchAct{\varepsilon}{a} \mid \orchAct{\Dual{a}}{a}\mid a\in\Names } $. 
Besides, orchestration actions of the form $\orchAct{\Dual{a}}{\varepsilon}$ or $\orchAct{\varepsilon}{\Dual{a}}$ must be used just as prefixes $\mu$ in orchestrators like $\mu.f$.
The other ruled-out cases, like $\orchAct{\Dual{c}}{\varepsilon}\Actdot f' \vee\orchAct{\varepsilon}{b}\Actdot g'$ or $\orchAct{c}{\varepsilon}\Actdot f' \vee\orchAct{\varepsilon}{b}\Actdot g'$, would conflict with the session viewpoint or, like $\orchAct{\Dual{c}}{\varepsilon}\Actdot f' \vee\orchAct{b}\varepsilon{}\Actdot g'$, would be meaningless according to the syntax of session contracts.

We now formally define orchestration actions by partitioning them into different syntactic categories.

\begin{definition}[Session-orchestration actions]
We define {\em $\OrchAct$} as the set of {\em session-orchestration actions} $\mu$ described by the following grammar (where $a\in\Names$ and $\Dual{a}\in\CoNames$): 
 \[ \begin{array}{lcl@{\hspace{18mm}}l lcl@{\hspace{4mm}}l }
\mu & ::= & \iota_L \mid \iota_R \mid o &
o & ::= &\orchAct{\Dual{a}}{\varepsilon} \mid \orchAct{\varepsilon}{\Dual{a}} \\
\iota_L & ::= &\orchAct{a}{\varepsilon} \mid \orchAct{a}{\Dual{a}} &
 \iota_R & ::= &\orchAct{\varepsilon}{a} \mid \orchAct{\Dual{a}}{a} 
\end{array} \]
\end{definition}
We let $\mu, \mu', \mu_1,\ldots$ range over orchestration actions,
and $\vec{\mu}$ over both finite sequence $\mu_1\ldots\mu_n$ in $\OrchAct^*$ and
infinite sequence $\mu_1\ldots\mu_n\ldots$ in $\OrchAct^\infty$.

\begin{definition}[Session Orchestrators]
We define $\Orch$ as the set of {\em session orchestrators}, ranged over by $f, g, h,\ldots$, described by the {\em closed} terms generated by the following grammar: 
 \[ \begin{array}{lrl@{\hspace{4mm}}l}
f, g & ::= & \stopf \\
 & \mid & \iota_L\Actdot f_1 \vee\cdots\vee \iota_L\Actdot f_n & (n\geq 1)\\
 & \mid & \iota_R\Actdot f_1 \vee\cdots\vee \iota_R\Actdot f_n & (n\geq 1)\\
 & \mid & o\Actdot f \\
 & \mid & x \\
 & \mid & \rec x \procdot f 
\end{array} \] 
We impose session orchestrators to be {\em contractive}, i.e.~the $f$ in $\rec x \procdot f$ is assumed to not be a variable.
\end{definition}

The expression
$\stopf$ represents the orchestrator offering no action. $o\Actdot f$ offers just the orchestration action of the category $o$ and
continues as $f$, whereas {$\iota_L\Actdot f_1 \vee\cdots\vee \iota_L\Actdot f_n$} and 
$\iota_R\Actdot f_1 \vee\cdots\vee \iota_R\Actdot f_n$ offer $n$ (uni-directional) actions of the syntactical categories, respectively, $\iota_L$ and $\iota_R$.
Recursive orchestrators can be expressed by means of the $\rec$ binder and recursion variables,
in the usual way. As for session contracts, orchestrators are defined as to have recursion
variables guarded by at least one orchestration action. 
In the following we shall often refer to `session orchestrators' as simply `orchestrators.'
As for session contracts, we take an equi-recursive point of view, so identify
$\rec x \procdot f$ and $f\Subst{x}{\rec x \procdot f}$.

We now define the operational semantics of orchestrators as an \LTS.

\begin{definition}[\LTS\ for Orchestrators] 
We define the labelled transition system 
$(\Orch, \mbox{\em $\OrchAct$} , \lts{})$ by 
 \[ \begin{array}{c@{\hspace{16mm}}c@{\hspace{16mm}}c}
\Inf{ \mu \Actdot f \OrchStep{\mu} f }
 & 
\Inf{
 f \OrchStep{\mu} f'
}{
 f\vee g \OrchStep{\mu} f'
}
 &
\Inf{
 g \OrchStep{\mu} g'
}{
 f\vee g \OrchStep{\mu} g'
}
\end{array} \]
Given a sequence $\vec{\mu}$, we write $f \OrchStep{\vec{\mu}} {} $ whenever
$f \OrchStep{\mu_1} f_1 \OrchStep{\mu_2} \cdots \OrchStep{\mu_n} f_n$ if $\vec{\mu}=\mu_1\cdots\mu_n\in\mbox{\em $\OrchAct$}^*$, or
$f \OrchStep{\mu_1} \cdots \OrchStep{\mu_n} f_n \!\OrchStep{\mu_{n+1}}\! \cdots\,$ if $\vec{\mu}=\mu_1\cdots\mu_n\cdots\in\mbox{\em $\OrchAct$}^\infty$. 
We write $f \noOrchStep $ if there is no $\mu$ such that $f \OrchStep{\mu} $. 
\end{definition}

\begin{definition}[Orchestrator Traces] \label{def:traces}
Let $f\in\Orch$.
\begin{enumerate} 
\item
The set $\Trace{f}\subseteq (\mbox{\em $\OrchAct$}^*\cup\mbox{\em $\OrchAct$}^\infty)$ of \ {\em traces of} $f$ is defined by:
$\Trace{f} = \Set{ \vec{\mu} \mid f \OrchStep{\vec{\mu}} {} } $.

\item
The set $\MaxTrace{f}\subseteq (\mbox{\em $\OrchAct$}^*\cup\mbox{\em $\OrchAct$}^\infty)$ of \ {\em maximal traces of}
 $f$ is defined by 
 \[ \begin{array}{rcl}
\MaxTrace{f} &=& \Set{ \vec{\mu}\in\Trace{f} \mid \exists f' \Pred[ f \OrchStep{\vec{\mu}}f' \noOrchStep ]
\mbox{ or } \mu\in\mbox{\em $\OrchAct$}^\infty } 
 \end{array} \]
 
\end{enumerate}
\end{definition}

As in \cite{Padovani10}, we define an {\em orchestrated system} as a triple $\langle \rho, f, \sigma \rangle$ (written $\rho \pf{f} \sigma$) representing $\rho$ (the client) and $\sigma$ (the server)  interacting with each
other under  the supervision of $f$. 


\begin{definition}[Orchestrated Systems operational semantics]
The operational semantics of orchestrated systems is defined as follows:
\[ \begin{array}{c@{\hspace{24mm}}c}
\Inf{
 \rho\lts{}\rho'
}{
 \rho \pf{f} \sigma \lts{} \rho' \pf{f} \sigma
}
&
\Inf{
 \sigma\lts{}\sigma'
}{
\rho \pf{f} \sigma \lts{} \rho \pf{f} \sigma'
}
\end{array} \] 
\[ \begin{array}{c@{\hspace{14mm}}c@{\hspace{14mm}}c}
\Inf{
 \rho \lts{\alpha}\rho' \quad f\ltsOrch{\Dual{\alpha}}{\alpha} f' \quad \sigma \lts{\Dual{\alpha}}\sigma'
}{
 \rho \pf{f} \sigma \lts{\langle{\Dual{\alpha}},{\alpha}\rangle} \rho' \pf{f'} \sigma'
} 
&
\Inf{
 \rho \lts{\Dual{\alpha}}\rho' \quad f\ltsOrch{\alpha}{\varepsilon} f'
}{
 \rho \pf{f} \sigma \lts{\langle{\alpha},{\varepsilon}\rangle} \rho' \pf{f'} \sigma
}
&
\Inf{
 f \ltsOrch{\varepsilon}{\alpha} f' \quad \sigma \lts{\Dual{\alpha}}\sigma' 
}{
 \rho \pf{f} \sigma \lts{\langle{\varepsilon},{\alpha}\rangle} \rho \pf{f'} \sigma'
}
\end{array} \]



\noindent
We write $\Lts{\mu}$ for $\rtclts{} \comps \lts{\mu} \comps \rtclts{} $, and $\Lts{\vec{\mu}}$ for $\Lts{\mu_1} \comps \cdots \comps \Lts{\mu_n}$ (resp.~$\Lts{\mu_1} \comps \Lts{\mu_2} \comps \cdots$) if $\vec{\mu}$ is finite (resp.~infinite). 
The notation 
$ \rho\pb{f}\sigma \notlts {} $ will be used when both  $ \rho\pb{f}\sigma\notlts{}$(according to the first two rules above) and $\neg \exists \mu \Pred[ \rho\pb{f}\sigma\Lts{\mu}{}]$ hold.

\end{definition}

Notice that for the operational semantics of orchestrated systems we have defined labelled reductions instead of a reduction 
relation (as done in \cite{Padovani10}). We label orchestrated-systems' transitions by the 
orchestration actions which make them possible, since in our setting we need to check for particular conditions of orchestrator buffers after the evolution of an orchestrated system. A buffer can be explicitly coupled with an orchestrator or can be represented implicitly by the actions performed by the orchestrator. The latter is the choice of \cite{Padovani10}, that we maintain. 

We now define a notion of compliance which is coarser than expected because of possible unfair behaviour of the orchestrators, which will be refined in Section~\ref{sec:respdecidability}.

\begin{definition}[Disrespectful and Strict Orchestrated Compliance]
\label{def:disstrictcompl}
An orchestrator $f$ is said to be $\rho\Strictdot \sigma$ {\em strict} whenever, for any finite $\vec{\mu}$, $f\OrchStep{\vec{\mu}} $ implies $ \rho\|_f\sigma \Lts{\vec{\mu}} {}$. 
We define:
\begin{enumerate}[i)]
\item
\label{def:disstrictcompl-i}
$f: \rho\complyPds \sigma$ ~if $f$ is $\rho \Strictdot  \sigma$ strict, and for any $\vec{\mu}$, $\rho'$ and $\sigma'$, the following holds:
\[ \begin{array}{rcl}
\rho \pf{f} \sigma \Lts{\vec{\mu}} \rho' \pf{f'} \sigma' \notlts{} &\mbox{implies}& \rho'=\mbox{\em $\stopA$}.
 \end{array} \]
\item
$ \begin{array}{@{}rcl}
\rho\complyPds \sigma &\textrm{if}& \exists f \Pred[ f: \rho\complyPds \sigma].
 \end{array} $

\end{enumerate}
\end{definition}

\section{Orchestrators Synthesis} 
\label{sec:orchsynth}

In this section we define an inference system $\derinf$ for (possibly open) orchestrators, deducing judgments like \mbox{$f: \rho \complyPdsF \sigma$}, under finitely many assumptions of a certain shape.
We first establish that the system is sound with respect to the $\complyPds$ relation. 
Then, on the basis of that system, we provide an algorithm $\Synth$ for orchestrator synthesis which, given $\rho$ and $\sigma$, returns the set of all the relevant orchestrators $f$ such that $\derinf f: \rho \complyPdsF \sigma$ (namely with $\Gamma = \emptyset$) and hence that $ f : \rho \complyPds \sigma$. 
The algorithm is essentially an exhaustive proof search for $\derinf$ that can be shown to be always terminating. 

\begin{definition}[The orchestrators inference system $\derinf$] 
\label{def:inferenceSyst}
The judgements of the system are expressions of the form 
$\Gamma\derinf f: \rho \complyPdsF \sigma$, where
 $\rho,\sigma\in\Sbehav$, $f$ is a (possibly open) orchestrator and 
$\Gamma$ is a set of assumptions of the form $x:\rho_i \complyPdsF \sigma_i$ such that: 
$x{:}\rho \complyPdsF \sigma \in\Gamma \And  y{:}\rho \complyPdsF \sigma\in\Gamma \implies x=y$
~(so $\Gamma$ represents an injective mapping from variables to expressions of the form $\rho \complyPdsF \sigma$). 
The axioms and rules of the system are described in Figure~\ref{fig:infsys}.
\end{definition}

In the inference system of Figure~\ref{fig:infsys} the symbol $\complyPdsF$ is a relation symbol
representing the relation $\complyPds$ as defined in Definition~\ref{def:disstrictcompl}.
In order to give the intuition behind the inference system, let us briefly comment on one of the rules, say $(\TcomplSumL)$. 
In case it is possible to show that $f'$ is an orchestrator for
$f_p\complyPds \sigma$, orchestrated compliance can be obtained for $\sum_{i\in I} a_i\Actdot {\rho}_i \complyPds \sigma$ by means of $\orchAct{\Dual{a}_p}{\varepsilon}\Actdot f'$, since the $\orchAct{\Dual{a}_p}{\varepsilon}$ action satisfies one of the {\em input requests} $a_i$s.
In case $x\not\in fn(f')$, we get that $\rec x \procdot \orchAct{\Dual{a}_p}{\varepsilon}\Actdot f' = \orchAct{\Dual{a}_p}{\varepsilon}\Actdot f'$. This means that axiom $(\TcomplAx)$ has been used in
the derivation of $f'$ and the interaction between $\sum_{i\in I} a_i\Actdot {\rho}_i$ and $\sigma$
finitely succeeds if the actions described in the branch from $(\TcomplSumL)$ to $(\TcomplAx)$ are performed. In case  $x\in fn(f')$, rule $(\TcomplHyp)$ has been used for $f'$, and a successful infinite interaction is possible between $\sum_{i\in I} a_i\Actdot {\rho}_i$ and $\sigma$
when the orchestrator  repeatedly performs the actions in the branch from $(\TcomplSumL)$ to $(\TcomplHyp)$, as described by the recursive orchestrator $\rec x \procdot \orchAct{\Dual{a}_p}{\varepsilon}\Actdot f'$.

\begin{figure}[t]
\hrule 
 \[ \begin{array}{@{\!}rl@{\quad}rl}
(\TcomplAx):&
\Inf { 
\Gamma\derinf \stopf : \stopA \complyPdsF \sigma
} &
(\TcomplHyp): & 
\Inf { 
\Gamma, x{:}\rho\complyPdsF\sigma \derinf x : \rho\complyPdsF\sigma 
}
	\\ [5mm]
(\TcomplSumL): &
\Inf [p\in I]{ 
	\Gamma,\; x{:} \sum_{i\in I} a_i\Actdot {\rho}_i \complyPdsF \sigma \derinf f' : {\rho}_p \complyPdsF \sigma
 }{ 
	\Gamma\derinf \rec x \procdot \orchAct{\Dual{a}_p}{\varepsilon}\Actdot f' : \sum_{i\in I} a_i\Actdot {\rho}_i \complyPdsF \sigma
} 
	&
(\TcomplSumR): & 
\Inf [p\in I]{ 
		\Gamma,\; x{:} \rho\complyPdsF \sum_{i\in I} a_i\Actdot {\sigma}_i \derinf f' : \rho \complyPdsF \sigma_p
 }{ 
	\Gamma\derinf \rec x \procdot \orchAct{\varepsilon}{\Dual{a}_p}\Actdot f' : \rho\complyPdsF \sum_{i\in I} a_i\Actdot {\sigma}_i
} 	
	\\ [6mm]
(\TcomplOOA): & \multicolumn{3}{l}{
\Inf { 
	\Gamma,\; x{:} \bigoplus_{i\in I} \Dual{a}_i\Actdot {\rho}_i \complyPdsF \bigoplus_{j\in J} \Dual{b}_j\Actdot {\sigma}_j \derinf f_j : \bigoplus_{i\in I} \Dual{a}_i\Actdot {\rho}_i\complyPdsF \sigma_j \quad (\forall j\in J)
 }{ 
	\Gamma\derinf \rec x \procdot  \bigvee_{j\in J} \orchAct{\varepsilon}{b_j}\Actdot f_j : \bigoplus_{i\in I} \Dual{a}_i\Actdot {\rho}_i \complyPdsF \bigoplus_{j\in J} \Dual{b}_j\Actdot {\sigma}_i
} 
}	\\ [6mm]
(\TcomplOOB): & \multicolumn{3}{l}{
\Inf { 
	\Gamma,\; x{:} \bigoplus_{i\in I} \Dual{a}_i\Actdot {\rho}_i \complyPdsF \bigoplus_{j\in J} \Dual{b}_j\Actdot {\sigma}_i \derinf f_i : {\rho}_i \complyPdsF \bigoplus_{j\in J} \Dual{b}_j\Actdot {\sigma}_i \quad (\forall i\in I)
 }{ 
	\Gamma\derinf \rec x \procdot  \bigvee_{i\in I} \orchAct{a_i}{\varepsilon}\Actdot f_i: \bigoplus_{i\in I} \Dual{a}_i\Actdot {\rho}_i \complyPdsF \bigoplus_{j\in J} \Dual{b}_j\Actdot {\sigma}_i
} 
}	\\ [6mm]
(\TcomplOSum): & \multicolumn{3}{l}{
\Inf [I=H\cup K, K\subseteq J]{ 
	\Gamma'\derinf f_i : \rho_i\complyPdsF\sum_{j\in J} a_j\Actdot {\sigma}_j \quad(\forall i \in H) \qquad 	\Gamma'\derinf f_i : {\rho}_i \complyPdsF\sigma_i \quad (\forall i \in K)
 }{ 
	\Gamma\derinf \rec x \procdot (\bigvee_{h\in H} \orchAct{a_h}{\varepsilon}\Actdot f_h)\vee (\bigvee_{k\in K} \orchAct{a_k}{\Dual{a}_k}\Actdot f_k) : \bigoplus_{i\in I} \Dual{a}_i\Actdot {\rho}_i \complyPdsF \sum_{j\in J} a_j\Actdot {\sigma}_j
} 
}	\\ [3mm]
\textrm{where} & \Gamma' = \Gamma,\;x{:}\bigoplus_{i\in I} \Dual{a}_i\Actdot {\rho}_i \complyPdsF \sum_{j\in J} a_j\Actdot {\sigma}_j.
	\\ [4mm]
(\TcomplSumO): & \multicolumn{3}{l}{
\Inf [J=H\cup K, K\subseteq I]{ 
	\Gamma'\derinf f_j : \sum_{i\in I} a_i\Actdot {\rho}_i \complyPdsF\sigma_j \quad (\forall j \in H) \qquad
	\Gamma'\derinf f_j : \rho_j\complyPdsF{\sigma}_j \quad (\forall j \in K )
 }{ 
	\Gamma\derinf \rec x \procdot (\bigvee_{h\in H} \orchAct{\varepsilon}{a_h}\Actdot f_h)\vee (\bigvee_{k\in K} \orchAct{\Dual{a}_k}{a_k}\Actdot f_k) : \sum_{i\in I} a_i\Actdot {\rho}_i \complyPdsF \bigoplus_{j\in J} \Dual{a}_j\Actdot {\sigma}_j
} 
}	\\ [3mm]
\textrm{where} & \Gamma' = \Gamma,\; x{:} \sum_{i\in I} \Dual{a}_i\Actdot {\rho}_i \complyPdsF \bigoplus_{j\in J} a_j\Actdot {\sigma}_j
 \end{array} \]
\caption{The inference system $\derinf$.}\label{fig:infsys}
\vspace*{2mm}
\hrule 
\end{figure}


\begin{definition}[Judgment Semantics]
\label{def:ComplModel}
Let $\Gamma = \Set{x_1{:}\rho_1\complyPdsF\sigma_1,\ldots, x_k{:}\rho_k\complyPdsF\sigma_k}$, and $\theta$ be a map such that
$\theta(x_i) = f_i$, where the $f_i$s are proper (i.e.~closed) orchestrators. Then we define:
 \[ \begin{array}{lcl}
\theta \models \Gamma &\ByDef& 
\forall {(x_i{:}\rho_i\complyPdsF\sigma_i) \in \Gamma} \Pred[ \theta(x_i):\rho_i\complyPds\sigma_i ]
	\\
\Gamma \models f:\rho \complyPdsF\sigma &\ByDef& 
\forall \theta \Pred[ \theta\models \Gamma \implies \theta(f):\rho\complyPds\sigma ]
 \end{array} \]
where $ \theta(f)$ is the result of substituting, for all variables $x\in f$, all free occurrences of $x$ by $\theta(x)$.
\end{definition}

\begin{theorem}[Soundness]\label{thm:soundeness-derinf}
If ~$\Gamma \derinf f :\rho\complyPdsF \sigma$~ then ~$\Gamma \models f:\rho \complyPdsF\sigma$.
\end{theorem}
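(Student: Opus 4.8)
The plan is to prove soundness by induction on the derivation of $\Gamma \derinf f :\rho\complyPdsF \sigma$, proceeding by case analysis on the last rule applied. For each rule I would fix an arbitrary $\theta$ with $\theta\models\Gamma$ and show that $\theta(f):\rho\complyPds\sigma$, i.e.\ that $\theta(f)$ is $\rho\Strictdot\sigma$ strict and that every maximal orchestrated run $\rho\pf{\theta(f)}\sigma\Lts{\vec\mu}\rho'\pf{f'}\sigma'\notlts{}$ ends with $\rho'=\stopA$. The two base cases set the pattern: for $(\TcomplAx)$ we have $f=\stopf$, $\rho=\stopA$, and $\theta(\stopf)=\stopf$ offers no action, so strictness holds vacuously and the only reachable stuck configuration already has client $\stopA$; for $(\TcomplHyp)$ the conclusion $\theta(x):\rho\complyPds\sigma$ is exactly the assumption guaranteed by $\theta\models\Gamma$.

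For the inductive rules the essential point is that each $\rec x\procdot(\ldots)$ in the conclusion unfolds, under the equi-recursive convention, into a guarded orchestrator whose first action(s) fire a corresponding synchronisation step in the orchestrated system. I would therefore, in each case, (a) check strictness by verifying that whenever the (unfolded) orchestrator can perform a prefix $\vec\mu$, the system $\rho\|_{\theta(f)}\sigma$ can actually track it with $\Lts{\vec\mu}$ — this follows because the session-contract premises in the side-conditions (e.g.\ $p\in I$, $K\subseteq J$, $I=H\cup K$) are exactly what guarantees that the matching client/server transition is available for each offered orchestration action; and (b) reduce the compliance obligation for the conclusion to that of the premise(s). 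The reduction in (b) works because after consuming the guarding action(s) the residual system is, up to the equi-recursive identity $\rec x\procdot f = f\Subst{x}{\rec x\procdot f}$, of the form $\rho_p\|_{\theta(f')}\sigma$ (or the analogous residual), to which I apply the induction hypothesis. To make the self-reference through $x$ legitimate I would extend $\theta$ to $\theta' = \theta\Subst{x}{\rec x\procdot\theta(f)}$ and argue that $\theta'\models\Gamma,\;x{:}\rho\complyPdsF\sigma$, using a coinductive/fixed-point reading of $\complyPds$: the recursive orchestrator in the conclusion is precisely the unfolding that validates the added hypothesis, so the premise's semantic judgment $\Gamma,x{:}\cdots\models f':\cdots$ transfers.

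The choice rules $(\TcomplOOA)$, $(\TcomplOOB)$, $(\TcomplOSum)$, $(\TcomplSumO)$ require slightly more care because the orchestrator offers a $\vee$-branching and because one of the partners resolves an internal choice via a $\lts{\tau}$ step before synchronisation. Here I would use the operational rules for orchestrated systems that let $\rho$ or $\sigma$ reduce autonomously: an internal choice $\bigoplus_{i}\Dual a_i\Actdot\rho_i\lts{\tau}\Dual a_k\Actdot\rho_k$ is matched by selecting the branch of the orchestrator's external $\vee$ whose index is $k$, and the side-condition (e.g.\ $K\subseteq J$, covering every partner branch) ensures such a matching branch always exists, which is what gives strictness for branching orchestrators. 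The compliance obligation again reduces branch-by-branch to the premises.

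The main obstacle I expect is the interaction between equi-recursion and the semantic (rather than syntactic) formulation of $\complyPds$, specifically establishing that $\theta'\models(\Gamma,x{:}\rho\complyPdsF\sigma)$ is sound: one must show that assuming the recursive orchestrator complies is justified rather than circular. This is the standard difficulty of proving a recursion/hypothesis rule sound against a non-inductively-defined (fairness-sensitive) compliance predicate, and it is compounded by the two-part nature of $\complyPds$ — the finitary strictness clause and the success-of-every-maximal-run clause must both be preserved under the fixed point. I would discharge it by observing that strictness is a safety property established finitarily along each guarded branch, while the success clause for infinite runs follows because every infinite trace through the recursion repeatedly returns to the same premise configuration, so no maximal run can get stuck with $\rho'\neq\stopA$; making this precise, likely via an auxiliary invariant on reachable configurations of $\rho\pf{\theta(f)}\sigma$, is where the real work lies.
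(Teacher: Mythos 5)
Your skeleton (induction on the $\derinf$-derivation, fixing an arbitrary $\theta\models\Gamma$) and your two base cases are fine, but the proof does not close at exactly the point you yourself flag, so this is a genuine gap rather than a deferred detail. In every recursion rule the induction hypothesis is conditional: \emph{if} $\theta'\models\Gamma,\,x{:}\rho\complyPdsF\sigma$ \emph{then} the premise orchestrator is compliant for the premise pair. To use it you instantiate $\theta'$ as $\theta$ extended with $x\mapsto\rec x\procdot\theta(g)$ (where $g$ is the body of the conclusion's orchestrator), and verifying $\theta'\models\Gamma,\,x{:}\rho\complyPdsF\sigma$ requires proving $\rec x\procdot\theta(g):\rho\complyPds\sigma$ --- precisely the conclusion you are deriving. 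The appeal to ``a coinductive/fixed-point reading of $\complyPds$'' does not rescue this as stated: Definition~\ref{def:disstrictcompl} does not present $\complyPds$ as the greatest fixed point of a monotone functional, and proving that it admits such a characterisation (so that the added hypothesis may legitimately be assumed) is essentially the missing lemma itself. Note also a misdiagnosis that matters for the repair: $\complyPds$ carries no fairness or infinite-run obligation at all --- both of its clauses (strictness, and stuck implies $\rho'=\stopA$) quantify over \emph{finite} $\vec{\mu}$ only, infinite runs being vacuously fine; fairness enters only later, through respectfulness and $\complyP$.

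That observation is what makes your proposed repair workable, but it still has to be carried out. Since $\complyPds$ is a pure safety property and orchestrators are contractive, every rule of $\derinf$ guards the hypothesis variable behind at least one orchestration action, so any finite trace of $\rho\pf{\theta(f)}\sigma$ re-enters the recursion only by consuming actions. Hence the induction that actually closes is not on the derivation alone but a primary induction on the length of $\vec{\mu}$ (for both clauses simultaneously), with a secondary case analysis on the last rule applied; equivalently, one can exhibit as invariant the finite set of configurations associated with the judgments occurring in the derivation, each $(\TcomplHyp)$ leaf being resolved to its ancestor judgment. For comparison, the paper takes a different route around the same circularity: it introduces an intermediate system $\der$ whose assumptions are on \emph{closed} orchestrators, asserts soundness and completeness for $\der$, and then maps any $\derinf$-derivation into a $\der$-derivation by pushing $\theta$ through it, so that the whole difficulty is isolated inside the soundness of $\der$ rather than spread over a per-rule induction. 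Either decomposition can be completed, but in your text the decisive step is exactly the one you label ``where the real work lies,'' and until it is done the induction does not go through.
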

\begin{proof} (Sketch)
It is possible to device a sound and complete system $\der$ for judgments of the shape $\Gamma \der f :\rho\complyPdsF \sigma$, where $f$ is a closed orchestrator and where
$\Gamma$ is a set of assumptions on closed orchestrators (not on variables as in $\derinf$). Now
it can be proved that if $f$ is closed and $\Gamma \derinf f :\rho\complyPdsF\sigma$ 
is derivable, then for any $\theta$ such that
$\theta\models\Gamma$ we have $\theta(\Gamma)\der f :\rho\complyPdsF\sigma$, where $\theta(\Gamma)$ is the result of substituting all orchestrator variables $x$ by $\theta(x)$. Then the thesis follows from the soundness of $\der$.
\end{proof}

\begin{figure}[t]
\hrule
\vspace{2mm}
{\small
\begin{tabbing}
\Synth\=$(\Gamma, \rho,\sigma)$ = \+ \\ [2mm]
\IF\ $x:\rho\complyPdsF\sigma \in \Gamma$ \THEN\ $\Set{x}$ 
	\\ [2mm]
\ELSE\; \= \IF\ $\rho = \stopA$ \THEN\ $\Set{\stopf}$ 
	\\ [2mm]
\ELSE \> \IF\ \= $\rho = \sum_{i\in I} a_i\Actdot {\rho}_i $ \AND\ $\sigma = \sum_{j\in J} a_j\Actdot {\sigma}_j $ \THEN \\ 
	\> \LET\ \= $\Gamma' = \Gamma,\; x{:}\rho\complyPdsF \sigma$ \IN \\ 
	\>\> $\bigcup_{i\in I}\Set{\rec x \procdot \orchAct{\Dual{a}_i}{\varepsilon}\Actdot f \mid f\in \Synth\,(\Gamma',{\rho}_i ,\sigma) } ~ \cup~ \bigcup_{j\in J}\Set{\rec x \procdot \orchAct{\varepsilon}{\Dual{a}_j}\Actdot f \mid f\in \Synth\,(\Gamma', {\rho}, \sigma_j) }$ 
	\\ [2mm]
\ELSE \> \IF\ \= $\rho = \bigoplus_{i\in I} \Dual{a}_i\Actdot {\rho}_i $ \AND\ $\sigma = \bigoplus_{j\in J} \Dual{a}_j\Actdot {\sigma}_i$ \THEN \\ 
	 \> \LET\ \= $\Gamma' = \Gamma,\; x{:}\rho\complyPdsF \sigma$ \IN \\ 
\>\>$\Set{\rec x \procdot \bigvee_{i\in I} \orchAct{a_i}{\varepsilon}\Actdot f_i \mid f_i\in \Synth\,(\Gamma', {\rho}_i , \sigma) }\,\cup\,\Set{\rec x \procdot \bigvee_{j\in J} \orchAct{\varepsilon}{a_j}\Actdot f_j \mid f_j\in \Synth\,(\Gamma', {\rho} , \sigma_j) }$ 	\\ [2mm]

\ELSE \> \IF\ \= $\rho = \bigoplus_{i\in I} \Dual{a}_i\Actdot {\rho}_i $ \AND\ $\sigma = \sum_{j\in J} a_j\Actdot {\sigma}_j$ 
			 \THEN\ \\ 
	 \> \LET\ \= $\Gamma' = \Gamma,\; x{:}\rho \complyPdsF \sigma$ \IN \\ 
\>\> $\Set{\rec x \procdot  (\bigvee_{h\in H} \orchAct{a_h}{\varepsilon}\Actdot f_h)
~\vee~(\bigvee_{k\in K} \orchAct{a_k}{\Dual{a}_k}\Actdot f_k)$ 
	\\ \>\> \qquad
$ \mid I= H\cup K, K\subseteq J, f_h\in \Synth(\Gamma', \rho_h, \sigma), f_k\in\Synth\,(\Gamma', {\rho}_k ,\sigma_k ) }$ \\ 
\>\> $\cup ~ \bigcup_{j\in J}\Set{\rec x \procdot \orchAct{\varepsilon}{\Dual{a}_j}\Actdot f \mid f\in \Synth\,(\Gamma', {\rho} , \sigma_j }$ 
	\\ [2mm]

\ELSE \> \IF\ \= \ $\rho = \sum_{i\in I} a_i\Actdot {\rho}_i $ \AND\ $\sigma = \bigoplus_{j\in J} \Dual{a}_j\Actdot {\sigma}_j$
			 \THEN\ \\ 
	 \> \LET\ \= $\Gamma' = \Gamma,\;x{:}\rho \complyPdsF \sigma$ \IN \\ 
\>\>$\Set{\rec x \procdot  (\bigvee_{h\in H} \orchAct{\varepsilon}{a_h}\Actdot f_h)
~\vee~(\bigvee_{k\in K} \orchAct{\Dual{a}_k}{a_k}\Actdot f_k) $ 
	\\ \>\> \qquad
$ \mid J = H\cup K, K\subseteq I, f_h\in \Synth((\Gamma', \rho,\sigma_h), f_k\in\Synth\,(\Gamma', {\rho}_k,\sigma_k )}$ \\ 
\>\> $\cup ~ \bigcup_{i\in I}\Set{\rec x \procdot \orchAct{\Dual{a}_i}{\varepsilon}\Actdot f \mid f\in \Synth\,(\Gamma', {\rho_i} , \sigma)}$ 
	\\ [2mm]
\ELSE \> $\emptyset$
\end{tabbing}
\vspace{-4mm}
}\caption{The algorithm \Synth.}\label{fig:Synth}
\vspace{2mm}
\hrule

\end{figure}

The synthesis algorithm $\Synth$ is defined in Figure~\ref{fig:Synth}. Given a set of assumptions $\Gamma$, a client $\rho$ and a server $\sigma$, the
algorithm computes a set of orchestrators $\cal F$ such that for all $f \in {\cal F}$ a derivation of $\Gamma\derinf f:\rho\complyPdsF\sigma$ exists.
The algorithm essentially mimics
the rules of the inference system of Figure~\ref{fig:infsys}. Intuitively, in case we are looking 
for orchestrators for
$\rho = \bigoplus_{i\in I} \Dual{a}_i\Actdot {\rho}_i $ and $\sigma = \bigoplus_{j\in J} \Dual{a}_j\Actdot {\sigma}_i$ under the assumptions $\Gamma$, we notice that they can be inferred for such
$\rho$ and $\sigma$ in system $\derinf$ only by means of rules 
$(\TcomplOOA)$ or $(\TcomplOOB)$ and that their form is, respectively,
 $\rec x \procdot \bigvee_{i\in I} \orchAct{a_i}{\varepsilon}\Actdot f_i$ or
$\rec x \procdot \bigvee_{j\in J} \orchAct{\varepsilon}{a_j}\Actdot f_j$, where the $f_i$s and the $f_j$s are the orchestrators for the pairs $\rho_i$,$\sigma$ and $\rho$,$\sigma_j$, respectively. This accounts 
for the fourth clause in the synthesis algorithm. We can prove the algorithm to be sound.

\begin{lemma}\label{lem:soundness-Synth-a}
If {\em \Synth}$(\Gamma, \rho, \sigma) = {\cal F} \neq \, \emptyset$ then, for all $f\in {\cal F}$, $\Gamma\derinf f:\rho\complyPdsF\sigma$ is derivable.
\end{lemma}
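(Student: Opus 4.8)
The plan is to prove the statement by induction on the finite tree of recursive calls generated by evaluating $\Synth(\Gamma,\rho,\sigma)$; this measure is well-founded precisely because $\Synth$ is terminating, as observed above. The guiding observation is that each non-vacuous clause of Figure~\ref{fig:Synth} produces orchestrators whose outermost shape is exactly the conclusion of one rule of Figure~\ref{fig:infsys}, while every orchestrator occurring as a sub-component is returned by a recursive call on a strictly smaller subtree. Hence the induction hypothesis provides a derivation for each sub-component, and the matching rule assembles a derivation of $\Gamma\derinf f:\rho\complyPdsF\sigma$.

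First I would settle the two leaves. If $x{:}\rho\complyPdsF\sigma\in\Gamma$, then ${\cal F}=\Set{x}$ and $(\TcomplHyp)$ gives $\Gamma\derinf x:\rho\complyPdsF\sigma$ directly. If instead the first test fails and $\rho=\stopA$, then ${\cal F}=\Set{\stopf}$ and $(\TcomplAx)$ gives $\Gamma\derinf\stopf:\stopA\complyPdsF\sigma$. The terminal clause returning $\emptyset$ is vacuous, the lemma assuming ${\cal F}\neq\emptyset$.

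Then I would run through the four structural clauses, in each setting $\Gamma'=\Gamma,x{:}\rho\complyPdsF\sigma$ as the algorithm does. The external/external clause is typical: a member of $\cal F$ is either $\rec x\procdot\orchAct{\Dual{a}_i}{\varepsilon}\Actdot f$ with $f\in\Synth(\Gamma',\rho_i,\sigma)$, whence the induction hypothesis yields $\Gamma'\derinf f:\rho_i\complyPdsF\sigma$ and $(\TcomplSumL)$ with $p=i$ concludes; or symmetrically $\rec x\procdot\orchAct{\varepsilon}{\Dual{a}_j}\Actdot f$ with $f\in\Synth(\Gamma',\rho,\sigma_j)$, dispatched by $(\TcomplSumR)$. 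The internal/internal clause matches $(\TcomplOOB)$ (first set) and $(\TcomplOOA)$ (second set); the internal/external clause matches $(\TcomplOSum)$ (first set) together with $(\TcomplSumR)$ (second set); and the external/internal clause matches $(\TcomplSumO)$ together with $(\TcomplSumL)$.

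The hard part will be the bookkeeping for the clauses that build a genuine choice $\bigvee$, namely those matched by $(\TcomplOOA)$, $(\TcomplOOB)$, $(\TcomplOSum)$ and $(\TcomplSumO)$. There a single $f\in{\cal F}$ is determined not by one recursive result but by an entire family of them: a tuple $(f_j)_{j\in J}$ (resp.\ $(f_i)_{i\in I}$), or, for $(\TcomplOSum)$/$(\TcomplSumO)$, a partition $I=H\cup K$ with $K\subseteq J$ together with tuples $(f_h)_{h\in H}$ and $(f_k)_{k\in K}$, each component coming from an independent recursive call. I would read the set-comprehensions of Figure~\ref{fig:Synth} as ranging over all such families, apply the induction hypothesis to every component separately to obtain the one-per-index premises, and then invoke the rule a single time, checking that the side condition on $H$ and $K$ is exactly the one carried by the rule. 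The remaining routine point is that every rule wraps its result in $\rec x\procdot$ whose bound variable is the one recorded in $\Gamma'$; the injectivity requirement on $\Gamma$ ensures this variable is the unique one associated with the pair $\rho\complyPdsF\sigma$, so the $x$ produced by the $(\TcomplHyp)$ leaf coincides with the $x$ bound by the enclosing recursion, and the assembled derivation is well formed.
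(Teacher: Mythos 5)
Your argument is correct. The paper in fact gives no proof of this lemma (it is left as routine), so there is nothing to compare against; but your induction on the tree of recursive calls of $\Synth$ --- well-founded by the termination lemma --- with each clause of Figure~\ref{fig:Synth} matched to the corresponding rule of Figure~\ref{fig:infsys}, the choice-building clauses read as ranging over tuples of independently obtained recursive results, and the injectivity of $\Gamma$ used to tie the $\rec x\procdot$ binder to the variable introduced in $\Gamma'$, is exactly the intended argument and fills the gap correctly.
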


On the other hand, the algorithm is complete in the following sense:

\begin{lemma}\label{lem:soundness-Synth-b}
If $f : \rho\complyPds \sigma$ and {\em \Synth}$(\emptyset,\rho,\sigma)$ terminates then there exists $g$ such that $g \in \mbox{\em \Synth}(\emptyset,\rho,\sigma)$.
\end{lemma}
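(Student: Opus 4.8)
The plan is to prove the statement directly, after generalising it to an arbitrary assumption set and reformulating it so that induction applies. The generalised claim I would establish is: \emph{if $\Synth(\Gamma,\rho,\sigma)$ terminates and there exists $f$ with $f:\rho\complyPds\sigma$, then $\Synth(\Gamma,\rho,\sigma)\neq\emptyset$}; the Lemma is the instance $\Gamma=\emptyset$. I would induct on the computation tree of $\Synth(\Gamma,\rho,\sigma)$, which is finite exactly because of the termination hypothesis, so that each recursive sub-call is a strictly smaller (hence also terminating) sub-computation. The two base cases fall out of the definition of $\Synth$ alone: if $x:\rho\complyPdsF\sigma\in\Gamma$ the algorithm returns $\{x\}$, and if $\rho=\stopA$ it returns $\{\stopf\}$; in both the output is non-empty and the hypothesis on $f$ is not even needed.

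For the inductive step I assume $\rho\neq\stopA$ and that no assumption in $\Gamma$ matches $(\rho,\sigma)$, so that $\Synth$ proceeds by one of the four shape-driven clauses or falls through to $\emptyset$. The heart of the argument is a \emph{shape lemma}: from $f:\rho\complyPds\sigma$ with $\rho\neq\stopA$ I would show that $(\rho,\sigma)$ must have one of the four forms handled by the clauses for $(\sum,\sum)$, $(\bigoplus,\bigoplus)$, $(\bigoplus,\sum)$, $(\sum,\bigoplus)$, i.e.\ it cannot be a ``dead'' pair routed to the final $\emptyset$ branch. The reasoning is operational: since $f$ is compliant, the success condition of Definition~\ref{def:disstrictcompl} forbids any reachable stuck state whose client component is $\neq\stopA$, so in particular $\rho\pf{f}\sigma$ itself must be able to move; inspecting the three system-transition rules against the shapes of $\rho$ (external $\sum$ versus internal $\bigoplus$) and of $\sigma$ pins down which orchestration action can fire first and forces the matching clause. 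I expect this to be the main obstacle, because it is precisely here that the full (buffer-aware) operational meaning of $\complyPds$ must be used to exclude the degenerate combinations with $\sigma=\stopA$: the local transition rules quoted in the excerpt underdetermine this, and one has to appeal to the semantics of the $o$-actions (taking from the buffer) to argue that such pairs admit no strict, compliant orchestrator.

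Having fixed the clause, I would carry out the \emph{peeling} step: extract from $f$ a residual orchestrator witnessing $\complyPds$ for one of the recursive sub-calls occurring in that clause. Unfolding $f$ under the equi-recursive convention exposes its top layer, which—because orchestrator choice is uni-directional and driven by the partner's internal choice—is either a single prefix $o\Actdot f'$ or a branching $\bigvee$ of $\iota_L$- (resp.\ $\iota_R$-) actions, matching one of the algorithm's constructions. I would argue that strictness of $f$ makes the corresponding first system-transition fire, and that the continuation(s) of $f$ after that transition are again strict and again satisfy the success condition for the residual pair(s). For the branching clauses ($\TcomplOOA$, $\TcomplOOB$, $\TcomplOSum$, $\TcomplSumO$) I would exploit that the success condition quantifies over \emph{all} runs, hence over every internal choice of the relevant partner, so that a witness $f_i$ (resp.\ $f_j$) is secured for \emph{every} branch $i\in I$ (resp.\ $j\in J$) simultaneously—exactly what the product-style set-builder in that clause needs in order to be non-empty.

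Finally I would close the induction. The chosen recursive call $\Synth(\Gamma',\rho',\sigma')$ is a proper sub-computation of the terminating call $\Synth(\Gamma,\rho,\sigma)$, so it terminates, and by the just-constructed witness together with the induction hypothesis it returns a non-empty set; since the clause for $(\rho,\sigma)$ assembles its result as a union containing $\{\rec x\procdot\mu\Actdot g \mid g\in\Synth(\Gamma',\rho',\sigma')\}$ (or the analogous branching set), that union is non-empty, and $\Synth(\Gamma,\rho,\sigma)\neq\emptyset$ follows. The routine parts here are the verifications that a residual of a strict orchestrator is strict and that the success condition is inherited across a single transition; the delicate parts, as flagged, are the shape lemma and, in the two-sided clauses, checking that witnesses can be obtained uniformly across all branches rather than for a single one.
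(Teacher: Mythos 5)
The paper states this lemma \emph{without} proof (only the informal discussion of how $g$ ``represents'' $f$ follows it), so your attempt can only be judged on its own terms, and on those terms it breaks at exactly the step you flag as the crux. Your \emph{shape lemma} is false, and the rescue you propose for it---appealing to the buffer semantics of the $o$-actions to exclude pairs with $\sigma=\stopA$---is not available: in this paper the buffer is purely implicit, none of the three transition rules for $\rho\pf{f}\sigma$ consults a buffer, and soundness/client-respectfulness enter only in Section~\ref{sec:respdecidability} and are deliberately \emph{not} part of $\complyPds$ (that is what ``disrespectful'' means). Concretely, take $\rho=\Dual{a}\Actdot\stopA$, $\sigma=\stopA$, $f=\orchAct{a}{\varepsilon}\Actdot\stopf$. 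Then $f$ is $\rho\Strictdot\sigma$ strict, and the only maximal run is $\rho\pf{f}\sigma\Lts{\vec{\mu}}\stopA\pf{\stopf}\stopA\notlts{}$ with $\vec{\mu}=\orchAct{a}{\varepsilon}$, so $f:\rho\complyPds\sigma$; yet $\Synth(\emptyset,\rho,\sigma)$ terminates with value $\emptyset$, because $\sigma=\stopA$ matches none of the four shape clauses. Note that this orchestrator performs no $o$-action at all---it merely buffers the client's output and never delivers it---so no strengthening of the $o$-action semantics can restore the shape lemma; the variant $\rho=a\Actdot\stopA$, $f=\orchAct{\Dual{a}}{\varepsilon}\Actdot\stopf$ kills it on the other side (and is even derivable in $\derinf$ via $(\TcomplSumL)$, whose $\sigma$ is arbitrary, so the paper's own Soundness theorem certifies it). These pairs are simultaneously counterexamples to the generalised statement you induct on---which you need as induction hypothesis precisely at sub-calls whose server component is $\stopA$---and, as far as I can see, to the lemma as literally stated.

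A second, independent problem survives even where the lemma does hold: your peeling step assumes that the pair reached by $f$'s \emph{actual} first action is one for which the algorithm's recursive call is non-empty, and that is not so. Take the paper's own motivating example, trimmed: $\rho=\Dual{a}\Actdot\Dual{b}$, $\sigma=a\Actdot\stopA$, $f=\orchAct{a}{\Dual{a}}\Actdot\orchAct{b}{\varepsilon}\Actdot\stopf$. Here $f:\rho\complyPds\sigma$ and $\Synth(\emptyset,\rho,\sigma)$ is indeed non-empty, but peeling $f$'s synchronous first action yields the residual pair $(\Dual{b}\Actdot\stopA,\stopA)$, whose $\Synth$-call returns $\emptyset$; the witness the algorithm actually produces, $\rec x\procdot\orchAct{a}{\varepsilon}\Actdot\rec y\procdot\orchAct{b}{\varepsilon}\Actdot\stopf$, is \emph{not} a residual of $f$ but a transformation of it, with the synchronous action replaced by an asynchronous one. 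This is exactly what the paper's post-lemma discussion of $g$ ``representing'' $f$ (``replacing synchronous actions by pairs of asynchronous ones'') is about: a correct argument has to build such a rerouting transformation into the induction, choosing at each step not the branch $f$ takes but a branch the algorithm can complete, and the statement itself must be read so as to exclude (or $\Synth$ extended to cover) the server-$\stopA$ pairs above. Your treatment of the branching clauses---securing one witness per internal-choice branch, uniformly---is fine, but without these two repairs the induction does not go through.
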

The $g$ of the above lemma {\em represents} $f$. In particular it could be got by  {\em delaying} the termination of the algorithm when the first clause is applicable. Moreover, it could
be got out of $f$ by replacing syncronous actions by pairs of asyncronous ones (or also by simply adding asyncronous actions). For instance,
for $\rho=\rec x. \Dual{a}.x$ and $\sigma=\rec x. a.x$ the orchestrator
$f=\orchAct{a}{\varepsilon}.\orchAct{\varepsilon}{\Dual{a}}.\rec x. \orchAct{a}{\Dual{a}}.x$
correctly mediates between $\rho$ and $\sigma$, the algorithm terminates, but $f\not\in\mbox{ \Synth}(\emptyset,\rho,\sigma)$. On the other hand, $g=\rec x. \orchAct{a}{\Dual{a}}.x$ belongs to $\mbox{\Synth}(\emptyset,\rho,\sigma)$ and it is related to $f$ in the sense above. It can be shown, besides, that if $f$ is {\em respectful} in the sense of Sect. \ref{sec:respdecidability} below, there exists a respectful $g$ in $\mbox{\Synth}(\emptyset,\rho,\sigma)$.

It remains to show that $\Synth$ is terminating:

\begin{lemma}\label{lem:termination-Synch}
For all $\Gamma$, $\rho$ and $\sigma$, the execution of {\em \Synth}$\,(\Gamma,\rho, \sigma)$ terminates.
\end{lemma}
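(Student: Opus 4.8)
The plan is to organise the recursive invocations of $\Synth$ into a \emph{call tree} rooted at $\Synth(\Gamma_0,\rho_0,\sigma_0)$ and to show this tree is finite by K\"onig's lemma, whence the computation terminates. First I would check that the tree is \emph{finitely branching}. Reading off Figure~\ref{fig:Synth}, every clause of $\Synth(\Gamma,\rho,\sigma)$ either returns immediately (when $x{:}\rho\complyPdsF\sigma\in\Gamma$ or $\rho=\stopA$, or in the final $\emptyset$ case) or issues recursive calls whose arguments range over the finitely many continuations $\rho_i$ and $\sigma_j$ of $\rho$ and $\sigma$. Even in the two mixed clauses, where one enumerates all splittings $I=H\cup K$ with $K\subseteq J$ (resp.\ $J=H\cup K$ with $K\subseteq I$), there are only finitely many such splittings, each triggering finitely many calls; so the set of argument triples handed to children is finite, and every node has finitely many children. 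Here I also use that the input $\Gamma_0$ is finite, so the first-clause membership test and the enumerations are finite work.

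Next I would bound the states that can occur as arguments. For a closed $\sigma\in\Sbehav$ let $S_\sigma$ be the least set containing $\sigma$, closed under taking the continuations $\sigma_i$ of every external or internal choice lying in $S_\sigma$, and closed under unfolding $\rec x\procdot\sigma'\mapsto\sigma'\Subst{x}{\rec x\procdot\sigma'}$; equivalently, $S_\sigma$ is the set of states reachable from $\sigma$ in the \LTS\ of Definition~\ref{def:SB-red}. The key fact I need is that $S_\sigma$ is \emph{finite} for every session contract: this is the standard regularity property, provable by observing that, modulo unfolding, a contractive finite $\mu$-term has only finitely many distinct subterms (contractivity together with the absence of two consecutive $\rec$ binders guarantees this). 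A routine induction over the clauses of $\Synth$, using that each recursive call replaces $\rho$ by some $\rho'\in S_\rho$ and $\sigma$ by some $\sigma'\in S_\sigma$, then shows that throughout the tree rooted at $\Synth(\Gamma_0,\rho_0,\sigma_0)$ every call $\Synth(\Gamma,\rho,\sigma)$ has $\rho\in S_{\rho_0}$ and $\sigma\in S_{\sigma_0}$; hence the pair $(\rho,\sigma)$ always ranges over the finite set $S_{\rho_0}\times S_{\sigma_0}$, a bound that is independent of $\Gamma_0$.

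Finally I would rule out infinite branches using the memoisation built into the first clause. Along any branch, each non-returning call $\Synth(\Gamma_n,\rho_n,\sigma_n)$ passes to its children the context $\Gamma_{n+1}=\Gamma_n,\,x_n{:}\rho_n\complyPdsF\sigma_n$, so $\Gamma_n$ records exactly the pairs $(\rho_0,\sigma_0),\dots,(\rho_{n-1},\sigma_{n-1})$ of its proper ancestors, together with whatever lies in $\Gamma_0$ (recall $\Gamma$ is an injective map, so membership of a pair is well defined). For the branch to continue past node $n$ the first clause must fail, i.e.\ $(\rho_n,\sigma_n)\notin\Gamma_n$; hence the pairs $(\rho_n,\sigma_n)$ occurring along any infinite branch would be pairwise distinct. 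Since they all lie in the finite set $S_{\rho_0}\times S_{\sigma_0}$, no such infinite branch can exist. A finitely branching tree with no infinite branch is finite by K\"onig's lemma, so $\Synth$ performs only finitely many recursive calls and terminates.

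I expect the main obstacle to be the finiteness of $S_\sigma$: the rest is bookkeeping about how $\Gamma$ accumulates ancestor pairs, but finiteness of the state set is the substantive ingredient that converts the pairwise-distinctness observation into a genuine depth bound. Once that regularity fact is in place (and it is standard for session types and contracts), the K\"onig's-lemma packaging and the inductive confinement of arguments to $S_{\rho_0}\times S_{\sigma_0}$ are routine.
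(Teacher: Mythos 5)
Your proposal is correct and follows essentially the same route as the paper's own (sketched) proof: the key fact in both is that, by regularity of session contracts, the arguments of the recursive calls range over a finite set of sub-expressions of the initial $\rho$, $\sigma$ and $\Gamma$, so the memoisation test $x{:}\rho\complyPdsF\sigma\in\Gamma$ can fail only finitely often along any branch. Your K\"onig's-lemma packaging and the explicit bookkeeping of $\Gamma$ as the set of ancestor pairs are just a more detailed rendering of the same argument.
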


\begin{proof} (Sketch)
The proof is based on the fact that all session contracts in the recursive calls of $\Synth$ are a sub-expression of either $\rho$ or $\sigma$ or of a session contract in a judgment in $\Gamma$ (which is finite). Since session contracts are regular trees, their sub-expressions are a finite set, so that the test $x{:}\rho\complyPdsF\sigma \in \Gamma$ (where $x$ is any variable) at the beginning of $\Synth$ cannot fail infinitely many times.
\end{proof}

\begin{corollary}\label{cor:decidable-complPds}
The relation $\complyPds$ is decidable; moreover if $\rho \complyPds \sigma$ then it is possible to compute a set ${\cal F}$ of orchestrators for $\rho$ and $\sigma$
\end{corollary}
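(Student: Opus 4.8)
The plan is to reduce decidability of $\complyPds$ to a non-emptiness test on the output of $\Synth$, using the three lemmas on $\Synth$ together with the Soundness Theorem for $\derinf$. Concretely, I would establish the equivalence
\[
\rho \complyPds \sigma \iff \Synth(\emptyset,\rho,\sigma) \neq \emptyset,
\]
and then observe that by Lemma~\ref{lem:termination-Synch} the right-hand side is effectively decidable: $\Synth(\emptyset,\rho,\sigma)$ always terminates and, being a finite union over finite index sets and over the finitely many splittings $H,K$ of recursively built finite sets, returns a finite and explicitly computable set of orchestrators. This settles both claims at once: decidability of $\complyPds$, and — taking ${\cal F} = \Synth(\emptyset,\rho,\sigma)$ whenever this set is non-empty — the computability of a set of orchestrators for $\rho$ and $\sigma$.

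For the direction ($\Leftarrow$) I would assume ${\cal F} = \Synth(\emptyset,\rho,\sigma) \neq \emptyset$ and pick any $f \in {\cal F}$. By soundness of the algorithm (Lemma~\ref{lem:soundness-Synth-a}) the judgment $\emptyset \derinf f : \rho \complyPdsF \sigma$ is derivable, whence by Theorem~\ref{thm:soundeness-derinf} we obtain $\emptyset \models f : \rho \complyPdsF \sigma$. Unfolding Definition~\ref{def:ComplModel} with the empty context, every $\theta$ vacuously satisfies $\emptyset$; since the orchestrators produced by $\Synth(\emptyset,\cdot,\cdot)$ are closed (each free variable $x$ returned by the first clause is captured by the matching $\rec x$ inserted in the recursive call that introduced $x$ into $\Gamma$), we have $\theta(f)=f$, so $f : \rho \complyPds \sigma$ and therefore $\rho \complyPds \sigma$ by Definition~\ref{def:disstrictcompl}. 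The direction ($\Rightarrow$) is completeness: if $\rho \complyPds \sigma$ then by definition there is an $f$ with $f : \rho \complyPds \sigma$, and since $\Synth(\emptyset,\rho,\sigma)$ terminates by Lemma~\ref{lem:termination-Synch}, Lemma~\ref{lem:soundness-Synth-b} yields some $g \in \Synth(\emptyset,\rho,\sigma)$, so the set is non-empty.

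The point requiring care — and the place where I expect the argument to be delicate rather than merely routine — is the precise reading of completeness. Lemma~\ref{lem:soundness-Synth-b} does \emph{not} assert that the witnessing $f$ itself lies in $\Synth(\emptyset,\rho,\sigma)$; as the remark following that lemma shows, $\Synth$ may instead return only a \emph{representative} $g$ of $f$ (for instance by replacing a synchronous action with a pair of asynchronous ones). For decidability this is harmless, because only non-emptiness of the computed set is needed, not that it contains every orchestrator for $\rho$ and $\sigma$; I would state this explicitly so that the corollary is not over-claimed. Finally, the ``moreover'' clause comes for free: each $f \in {\cal F}$ is a genuine orchestrator for $\rho$ and $\sigma$ by exactly the ($\Leftarrow$) argument above, so ${\cal F} = \Synth(\emptyset,\rho,\sigma)$ is the desired computable set of orchestrators.
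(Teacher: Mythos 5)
Your proposal is correct and follows exactly the route the paper intends: the corollary is obtained by reducing $\rho \complyPds \sigma$ to non-emptiness of $\Synth(\emptyset,\rho,\sigma)$, using Lemma~\ref{lem:soundness-Synth-a} together with Theorem~\ref{thm:soundeness-derinf} for one direction, Lemma~\ref{lem:soundness-Synth-b} for the other, and Lemma~\ref{lem:termination-Synch} for effectiveness. Your explicit caveat about completeness only yielding a \emph{representative} $g$ (not the witnessing $f$ itself) matches the paper's own discussion following Lemma~\ref{lem:soundness-Synth-b}, and is exactly the right reading.
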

Recall that the computed orchestrators {\em represent} all the possible orchestrators, in the sense of the discussion after Lemma \ref{lem:soundness-Synth-b}.



\section{Respectfulness} 
\label{sec:respdecidability}

The definition of orchestrators implies they have buffering capabilities.
The sort of buffer taken into account in \cite{Padovani10}, as well as by us, is made of a number of bi-directional buffers (where only a finite subset is actually non empty), one for each possible name.
A bi-directional buffer is actually made of two distinct buffers, one containing the messages
received from the client that have to be delivered to the server, and the other one containing the messages received from the server that should be delivered to the client. 

In \cite{Padovani10} orchestrators are restricted to have bounded buffering capabilities and
such a restriction is used in the proofs of several properties concerning contract orchestrators. 
In our setting we can eliminate that restriction, so allowing  more client/server pairs to be compliant, like for instance $\rec x \procdot a\Actdot x$ and $\rec x \procdot \Dual{b}\Actdot \Dual{a}\Actdot x$, and the example in the introduction.
We will now formalise the notion of buffer. 

\Comment{
\begin{definition}[Buffers]
A bi-directional buffer $\Buf$ is a set of the form $\Set{ \numberused{c_a}{a}{s_a}\mid a\in\Names } $ where,
for any $a\in\Names$, $c_a,s_a\in\mathbb{Z}$. 
The $c_a$ in $\numberused{c_a}{a}{s_a}$ represents the number of $a$'s elements in the part of the buffer containing elements sent by the client in order to be delivered to the server. 
The $s_a$ in $\numberused{c_a}{a}{s_a}$ represents the number of $a$'s elements in the part of the buffer containing elements sent by the server in order to be delivered to the client. We define: $\emptyBuf=\Set{ ^{0}\!a^{0}\mid a\in\Names } $ and\\
\centerline{$\begin{array}{rcl}
\Bufcs{a}{+}&=&(\Buf\setminus\Set{ \numberused{c_a}{a}{s_a} } )\cup\Set{ \numberused{c_a+1}{a}{s_a} } \\
\Bufcs{a}{-}&=&(\Buf\setminus\Set{ \numberused{c_a}{a}{s_a} } )\cup\Set{ \numberused{c_a-1}{a}{s_a} } 
\end{array}$
\hspace{10mm}
$\begin{array}{rcl}
\Bufsc{a}{+}&=&(\Buf\setminus\Set{ \numberused{c_a}{a}{s_a} } )\cup\Set{ \numberused{c_a}{a}{s_a+1} } \\
\Bufsc{a}{-}&=&(\Buf\setminus\Set{ \numberused{c_a}{a}{s_a} } )\cup\Set{ \numberused{c_a}{a}{s_a-1} } 
\end{array}$}

\noindent
Moreover, we denote by $\numberstoc |\Buf|_a$ the number of $a$'s in the server-to-client part of the buffer, i.e.~$|\Buf|_a = s_a$ and similarly for the client-to-server, i.e. $ _a|\Buf| = c_a$.

The state of a buffer
$\Buf$ after an orchestration action $ \mu$ is denoted by $\Buf\mu$, and is defined by\\
\centerline{$\begin{array}{rcl}
\Buf\orchAct{\alpha}{\Dual{\alpha}}&=&\Buf \\
\Buf\orchAct{\Dual{a}}{\varepsilon}&=&\Bufcs{a}{-} \\
\end{array}$
\hspace{14mm}
$\begin{array}{rcl}
\end{array}$
\hspace{14mm}
$\begin{array}{rcl}
\Buf\orchAct{\varepsilon}{\Dual{a}}&=&\Bufsc{a}{-} \\
\Buf\orchAct{\Dual{a}}{\varepsilon}&=&\Bufcs{a}{-} \\
\Buf\orchAct{\varepsilon}{a}&=&\Bufsc{a}{+}
\end{array}$}

\noindent
By $\Buf\vec{\mu}$ we denote the buffer $\Buf$ after the (finite) sequence $\vec{\mu}$ of orchestration actions. 
\end{definition}

}

\begin{definition}[Buffers]
 \begin{enumerate}
 
 \item
A bi-directional buffer $\Buf$ is a set of the form $\Set{ \numberused{c_a}{a}{s_a}\mid a\in\Names } $ where,
for any $a\in\Names$, $c_a,s_a\in\mathbb{Z}$. 
The $c_a$ in $\numberused{c_a}{a}{s_a}$ represents the number of $a$'s in the part of the buffer containing messages 
sent by the client  to the server. 
The $s_a$ in $\numberused{c_a}{a}{s_a}$ represents the number of $a$'s  in the part of the buffer containing messages sent by the server to the client. \Comment{
We define: $\EmptyBuf=\Set{ ^{0}\!a^{0}\mid a\in\Names } $ and
 \[ \begin{array}{rcl}
\Bufcs{a}{+}&=&(\Buf\setminus\Set{ \numberused{c_a}{a}{s_a} } )\cup\Set{ \numberused{c_a+1}{a}{s_a} } \\
\Bufcs{a}{-}&=&(\Buf\setminus\Set{ \numberused{c_a}{a}{s_a} } )\cup\Set{ \numberused{c_a-1}{a}{s_a} } 
 \end{array} 
\hspace{10mm}
 \begin{array}{rcl}
\Bufsc{a}{+}&=&(\Buf\setminus\Set{ \numberused{c_a}{a}{s_a} } )\cup\Set{ \numberused{c_a}{a}{s_a+1} } \\
\Bufsc{a}{-}&=&(\Buf\setminus\Set{ \numberused{c_a}{a}{s_a} } )\cup\Set{ \numberused{c_a}{a}{s_a-1} } 
\end{array} \]

\item
We denote by $\numberstoc |\Buf|_a$ the number of $a$'s in the server-to-client part of the buffer, i.e.~$|\Buf|_a = s_a$ and similarly for the client-to-server part, i.e.~$ {_a} |\Buf| = c_a$.
}

\item
We define: $\EmptyBuf=\Set{ ^{0}\!a^{0}\mid a\in\Names } $ and
\[ \begin{array}{rcl}
\Bufcs{a}{+}&=&(\Buf\setminus\Set{ \numberused{c_a}{a}{s_a} } )\cup\Set{ \numberused{c_a+1}{a}{s_a} } \\
\Bufcs{a}{-}&=&(\Buf\setminus\Set{ \numberused{c_a}{a}{s_a} } )\cup\Set{ \numberused{c_a-1}{a}{s_a} } \\
 \end{array} 
\hspace{10mm}
 \begin{array}{rcl}
\Bufsc{a}{+}&=&(\Buf\setminus\Set{ \numberused{c_a}{a}{s_a} } )\cup\Set{ \numberused{c_a}{a}{s_a+1} } \\
\Bufsc{a}{-}&=&(\Buf\setminus\Set{ \numberused{c_a}{a}{s_a} } )\cup\Set{ \numberused{c_a}{a}{s_a-1} } 
\end{array} \] 

\item
We denote by $\numberstoc |\Buf|_a$ the number of $a$'s in the server-to-client part of the buffer, i.e.~$|\Buf|_a = s_a$ and similarly for the client-to-server  part, i.e.~$\numberctos _a|\Buf| = c_a$. 

\item
The state of a buffer
$\Buf$ after an orchestration action $ \mu$ will
be denoted by $\Buf\mu$, defined by
 \[ \begin{array}{rcl}
\Buf\orchAct{\Dual{a}}{\varepsilon}&=&\Bufcs{a}{-} \\
\Buf\orchAct{a}{\varepsilon}&=&\Bufcs{a}{+} \\
 \end{array} 
 \hspace{14mm}
\begin{array}{rcl}
\Buf\orchAct{\alpha}{\Dual{\alpha}}&=&\Buf \\
\end{array}
\hspace{14mm}
 \begin{array}{rcl}
\Buf\orchAct{\varepsilon}{\Dual{a}}&=&\Bufsc{a}{-} \\
\Buf\orchAct{\varepsilon}{a}&=&\Bufsc{a}{+}
 \end{array} \] 

\item
By $\Buf\vec{\mu}$ we denote the buffer $\Buf$ after the sequence $\vec{\mu}$ of orchestration actions. 

 \end{enumerate}
 \end{definition}



In Definition~\ref{def:disstrictcompl} we considered the relation $\complyPds$, which we have studied so far.
This is however much weaker than
expected, and it is time to face the issue. 
Consider the simple orchestrated system
 \centerline{$ \begin{array}{rl}
\Dual{a}\Actdot \Dual{b} \pf{f} a\Actdot c\Actdot d & \textrm{where } f = \orchAct{a}{\Dual{a}}\Actdot \orchAct{b}{\varepsilon}\Actdot \stopf. 
 \end{array} $}

It is easy to check that $f: \Dual{a}\Actdot \Dual{b}\complyPds a\Actdot c\Actdot d$ since 
$f$ is strict for the given client/server pair and \linebreak $\Dual{a}\Actdot \Dual{b} \pf{f} a\Actdot c\Actdot d \Lts{\vec{\mu}} \stopA \pf{\stopf} c\Actdot d \notlts {} $, where $\vec{\mu}= \orchAct{a}{\Dual{a}}\orchAct{b}{\varepsilon}$. 
It is definitely true that all the client's ``requests'' have been satisfied, but not all by the server!
The action $\Dual{b}$ of the client has been taken care of exclusively by the orchestrator, which in that case has not acted simply as a mediator, but has effectively participated to the completion of the client's requests. 

So, in order to strengthen Definition~\ref{def:disstrictcompl} (\ref{def:disstrictcompl-i}), in case $\rho' \pf{f'} \sigma' \notlts {}$, we have to impose some conditions on the client-to-server buffer associated to $f'$; in particular, that it should be empty.
Of course, a similar condition must hold also for infinite interactions; this implies that in an infinite interaction, for any possible name, say $a$, used by the orchestrator, the latter cannot indefinitely perform input actions for $a$ from the client (even if interspersed with actions for other names) without ever delivering an $a$ to the server.
We must therefore forbid a client like $\rec x \procdot \Dual{a}\Actdot \Dual{c}\Actdot x$ to be compliant with the server $\rec x\procdot c\Actdot x$ by means of the orchestrator $\rec x \procdot\orchAct{a}{\varepsilon}\Actdot \orchAct{c}{\Dual{c}}\Actdot x$.
Orchestrated finite and infinite interaction sequences which do not correspond to unwanted situations like those just sketched will be called {\bf client-respectful}.

Even if the notion of compliance enforces the sense of the bias towards the client (any client request must be eventually satisfied by the server), some conditions need to be imposed on
the part of interactions on behalf of the server. In fact, we wish to prevent a server to be compliant with a client by means of an orchestrator that, from a certain moment on,  interacts infinitely many times with the server only, like in the orchestrated system\\
 \centerline{ $ \begin{array}{rl}
\Dual{a}\Actdot b \pb{f} \rec x \procdot \Dual{c}\Actdot \Dual{b}\Actdot  x
& \textrm{where } f=\orchAct{a}{\varepsilon} \Actdot\rec x \procdot \orchAct{\varepsilon}{c} \Actdot  \orchAct{\varepsilon}{b} \Actdot x
 \end{array} $}
We wish to prevent this kind of infinite interaction that we dub
{\bf definitely server-inputted}. Notice that, however, we can permit interactions in which the orchestrator can perform the input of some $a$ from the server infinitely many times, without ever performing an output of $a$ to the client, like it happens for {\tt wind} in the example in the introduction.

We observe that the problem -- whether an orchestrator will ever engage in any of the aforementioned pathological interactions --
might well be undecidable for contracts in general; indeed, it shares similarities with, for example, termination of two-counter machines \cite{countMachines}. 
However, we stress that we are in the restricted setting of session contracts, which suffices to make such properties decidable.

Among the properties we have to take care of, one is that in an interaction sequence there cannot exist an orchestrator action removing an element from an empty buffer, i.e.~a {\bf sound} sequence never sends an element $a$ to a server or to a client if the $a$ has not been previously received. 
We call the sum of all the above properties {\bf respectfulness}.

\begin{definition}\label{def:lr-restriction}
Given $\vec{\mu}\in\OrchAct^*\cup\OrchAct^\infty$, we define $\restrl{\vec{\mu}}{a}$, its left-restriction to a name $a$, as follows ($\lambda$ is the empty sequence):
 \[ \begin{array}{lcll}
\restrl{\lambda}{a} &=& \lambda, \\
\restrl{(\mu\vec{\mu'})}{a} &=& \mu\,\restrl{\vec{\mu'}}{a}, &\mbox{ if }  \mu\in\Set{
\orchAct{\varepsilon}{\Dual{a}}, \orchAct{a}{\varepsilon}}, \\
\restrl{(\mu\vec{\mu'})}{a} &=& \restrl{\vec{\mu'}}{a}& \mbox{ otherwise. }
\end{array} \]
\end{definition}


\begin{definition}[Respectful sequences and orchestrators]\label{def:respSeq} 
Let $\vec{\mu}\in\OrchAct^*\cup\OrchAct^\infty$ and 
$\mu\in\OrchAct$.
\begin{enumerate}[a)]
\item
Given $S\subseteq\OrchAct$, we say $\vec{\mu}$  to be {\em definitely-$S$}  whenever: 
 \[
\exists k ~ \forall m\geq k \Pred[ \textrm{the } m\textrm{-th element of $\vec{\mu}$ belongs to }S ];
 \]
For sets that are singletons  we write `definitely-$\mu$' instead of `definitely-$\Set{\mu}$.' 
\item
We say $\vec{\mu}$ to be a {\em sound sequence} whenever: 
 \[
\forall a \in \Names ~ \forall n\leq|\vec{\mu}| \Pred [ \leftinseq _a|\EmptyBuf\mu_1\cdots\mu_n|\geq 0 ~\mbox{ and }~ \rightinseq |\EmptyBuf\mu_1\cdots\mu_n|_a \geq 0 ]
 \]

\item \label{def:respSeq-iii}
We say $\vec{\mu}$ to be {\em client-respectful} sequence whenever, for any $a \in \Names$:
 \[ \begin{array}{ccc}
\restrl{\vec{\mu}}{a} \mbox{ is finite} \mbox{ and } \leftinseq _a |\EmptyBuf \vec{\mu}| = 0 
	& \mbox{ or  } & \restrl{\vec{\mu}}{a} \mbox{ is infinite and non-definitely-}\orchAct{a}{\varepsilon}
 \end{array} \]

\item
We say $\vec{\mu}$ to be {\em non-definitely server--inputted} whenever:
\[ \vec{\mu} \textrm{ is  infinite$\quad\implies\quad \vec{\mu}$  is non-definitely-}\Set{\orchAct{\varepsilon}{a} \mid a\in\Names}
 \]

\item\label{def:respSeq-v}
We say $\vec{\mu}$ to be {\em respectful} whenever 
$\vec{\mu}$ is sound, client-respectful and non-definitely server-inputted.
\item
We say that an orchestrator $f$ is {\em respectful} whenever every $\vec{\mu}\in \MaxTrace{f}$
is so.
\end{enumerate}
\end{definition}

We will look now at a few examples in order to get a better intuition about the above definition. 

 \begin{example}
 \begin{itemize}
 \item 
The finite sequence $\orchAct{a}{\varepsilon}\Actdot \orchAct{\varepsilon}{\Dual{b}}\Actdot \orchAct{\varepsilon}{\Dual{a}}$ is not respectful
since it is not sound. In fact, for the name $b$, we have that
$\rightinseq |\EmptyBuf\Actdot \orchAct{a}{\varepsilon}\Actdot \orchAct{\Dual{b}}{\varepsilon}|_b = -1 < 0$.

 \item
The sequence 
$\orchAct{a}{\varepsilon}\Actdot \orchAct{b}{\varepsilon}\Actdot \orchAct{\varepsilon}{\Dual{a}}$ instead, is sound, but nonetheless it is not client-respectful, since it is not infinite and for the name $b$ we have
$\leftinseq _b |\EmptyBuf \orchAct{a}{\varepsilon}\Actdot \orchAct{b}{\varepsilon}\Actdot \orchAct{\varepsilon}{\Dual{a}}| =1\neq 0 $. 

 \item
The orchestrator $f = \orchAct{c}{\Dual{c}}\Actdot \rec x \procdot (\orchAct{\Dual{a}}{a} \vee\orchAct{c}{\varepsilon}\Actdot \orchAct{b}{\Dual{b}}\Actdot x)$ is not respectful since it is not client-respectful. In fact, for the sequence
$\vec{\mu}=\orchAct{c}{\Dual{c}}\Actdot \orchAct{c}{\varepsilon}\Actdot \orchAct{b}{\Dual{b}}\Actdot \orchAct{c}{\varepsilon}\Actdot \orchAct{b}{\Dual{b}} \cdots \in \MaxTrace{f}$ and the name $c$, we have that  $\restrl{\vec{\mu}}{c}$ is infinite and $\restrl{\vec{\mu}}{c}=\orchAct{c}{\varepsilon}\Actdot \orchAct{c}{\varepsilon}\Actdot \orchAct{c}{\varepsilon}\cdots $ is definitely-$\orchAct{c}{\varepsilon}$. In fact, from the very first element on it is 
made of $\orchAct{c}{\varepsilon}$ actions. 

 \item
The orchestrator $f = \orchAct{c}{\Dual{c}}\Actdot \rec x \procdot (\orchAct{\Dual{a}}{a} \vee\orchAct{\varepsilon}{b}\Actdot \orchAct{\varepsilon}{c}\Actdot x)$ is not respectful since it is not definitely server-inputted. In fact, the infinite sequence $\vec{\mu}=\orchAct{c}{\Dual{c}}\Actdot \orchAct{\varepsilon}{b}\Actdot \orchAct{\varepsilon}{c}\Actdot \orchAct{\varepsilon}{b}\Actdot \orchAct{\varepsilon}{c} \cdots\in \MaxTrace{f}$ is  definitely-$\Set{\orchAct{\varepsilon}{a} \mid a\in\Names}$.  
The orchestrator {\tt f} in the introduction, instead, is non-definitely server-inputted, and also respectful, as a matter of fact.
 \end{itemize}
 \end{example}

\begin{remark}\label{rem:impl}
{
By Definition~\ref{def:lr-restriction}, the sequence $\restrl{\vec{\mu}}{a}$ in Definition~\ref{def:respSeq}(\ref{def:respSeq-iii}) cannot contain synchronous orchestration actions like $\orchAct{a}{\Dual{a}}$. Hence, for example, the orchestrator  
$g = \orchAct{a}{\varepsilon}\Actdot \rec x \procdot \orchAct{a}{\Dual{a}}\Actdot x$ is not client-respectful, and so it is not respectful at all. This is because the first $a$ coming from the client will never be delivered to the server since any subsequent output $\Dual{a}$ will be paired with a further input of $a$. 
This might be irrelevant when distinct occurrences of the same message are indistinguishable, but in general the number of input-output actions matters.

On the other hand forcing the orchestrator to immediately forward a message is a desirable capability, which would be definitely lost by equating
$\orchAct{a}{\varepsilon}\Actdot \orchAct{\varepsilon}{\Dual{a}}$ to $\orchAct{a}{\Dual{a}}$, and by ruling out the latter.
}
\end{remark}

We can now properly define the full notion of compliance and characterise it.
\begin{definition}[Orchestrated Session Compliance]\label{def:orchCompl}
\begin{enumerate}[i)] 
\item\label{def:orchCompl-a}
We say that a client $\rho$ is compliant with a server $\sigma$ through the orchestration of $f$, and denote this by $f: \rho\complyP \sigma$, whenever 
\begin{enumerate}[a)] 
\item
$\rho \pf{f} \sigma \Lts{\vec{\mu}} \rho' \pf{f'} \sigma' \notlts{} $ ~~ implies~~ $\rho'=\mbox{\em $\stopA$} $~ and~ $\vec{\mu}$ is respectful, ~~ and
\item
$\rho \pf{f} \sigma \Lts{\vec{\mu}}$ with  $\vec{\mu}\in\OrchAct^\infty $
~~implies~~ $ \vec{\mu}$ is respectful.
\end{enumerate}
\item
We write $\rho\complyP\sigma$ whenever there exists an orchestrator $f$ such that $f: \rho\complyP \sigma$.
\end{enumerate}
\end{definition}
Notice that we cannot define orchestrated compliance by simply imposing $f$ to be respectful 
in Def.~\ref{def:orchCompl}(\ref{def:orchCompl-a}), since that would prevent the possibility of
$\Dual{a}$ be compliant with $a$ through the mediation of the orchestrator $\orchAct{a}{\Dual{a}}\vee\orchAct{\varepsilon}{\Dual{b}}$. This orchestrator is not respectful, but
its sequences of actions in any possible orchestration between $\Dual{a}$ and  $a$ are respectful.

We can show that, if compliance could be obtained by means of a non-respectful orchestrator, it is always 
possible to get it through a respectful one. Besides, we can show the correspondence between
$\complyP$ and $\complyPds$.

\begin{proposition}\label{prop:charcompl}
\begin{enumerate}[i)]
\item 
$ \begin{array}{@{}rcl}
f:\rho\complyP \sigma &\implies& \exists f' \Pred[f':\rho\complyP \sigma \mbox{ such that $f'$ is $\rho \Strictdot \sigma$ strict}].
\end{array} $
\item
\label{prop:charcompl-i}
$ \begin{array}{@{}rcl}
f:\rho\complyP \sigma\textit{ and $f$ is $\rho \Strictdot \sigma$ strict} &\Iff& f: \rho\complyPds \sigma\textit{ and $f$ is respectful.}
\end{array} $
\item
\label{prop:charcompl-ii}
$ \begin{array}{@{}rcl}
\rho\complyP \sigma &\Iff& \exists f \Pred[ f: \rho\complyPds \sigma\mbox{~where $f$ is respectful}]. 
\end{array} $
\end{enumerate}
\end{proposition}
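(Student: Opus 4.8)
The plan is to prove parts (i) and (ii) directly and obtain (iii) as a corollary. For (iii), the forward direction unfolds $\rho\complyP\sigma$ to some $f$ with $f:\rho\complyP\sigma$, applies (i) to replace $f$ by a $\rho\Strictdot\sigma$ strict $f'$ with $f':\rho\complyP\sigma$, and then applies the $(\Rightarrow)$ half of (ii) to conclude $f':\rho\complyPds\sigma$ with $f'$ respectful; the backward direction takes the witnessing respectful $f$ with $f:\rho\complyPds\sigma$ and applies the $(\Leftarrow)$ half of (ii) to get $f:\rho\complyP\sigma$, hence $\rho\complyP\sigma$. So all the work is in (i) and (ii).

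For (i) I would build $f'$ by pruning from $f$ every orchestration action that is offered at some reachable configuration of $\rho\pf{f}\sigma$ but can never fire there, because neither the client nor the server can match it. Since session contracts and orchestrators are regular, this reachable-and-fireable restriction is again a finite, contractive orchestrator, and by construction every finite trace of $f'$ is realizable, so $f'$ is $\rho\Strictdot\sigma$ strict. The key observation is that deleting never-fireable branches changes neither the runs of the orchestrated system nor, in particular, its finite stuck configurations nor its infinite interactions: any configuration that is stuck under $f'$ was already stuck under $f$ with the same label $\vec{\mu}$, since the removed actions did not fire anyway. Hence conditions (a) and (b) of $f:\rho\complyP\sigma$ transfer verbatim to $f'$, giving $f':\rho\complyP\sigma$.

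For (ii) I would treat the two directions separately. The $(\Rightarrow)$ direction is the clean one: the stuck-implies-$\stopA$ clause of $\complyPds$ is literally the first conjunct of clause (a) of $\complyP$, and strictness is assumed, so $f:\rho\complyPds\sigma$ is immediate; for respectfulness I take any $\vec{\mu}\in\MaxTrace{f}$ and use strictness to realize it. If $\vec{\mu}$ is finite with $f\OrchStep{\vec{\mu}}f'\noOrchStep$, strictness gives $\rho\pf{f}\sigma\Lts{\vec{\mu}}\rho'\pf{f'}\sigma'$, and since $f'$ offers nothing, resolving the finitely many remaining $\tau$-moves of $\rho',\sigma'$ reaches a genuinely stuck configuration carrying the same label $\vec{\mu}$, whence clause (a) yields that $\vec{\mu}$ is respectful; if $\vec{\mu}$ is infinite, applying strictness to all its finite prefixes realizes it as an infinite interaction and clause (b) gives respectfulness. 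In the $(\Leftarrow)$ direction strictness is part of $\complyPds$, clause (a)'s $\stopA$-requirement is again exactly the $\complyPds$ requirement, and clause (b) is easy because an infinite interaction $\vec{\mu}$ of $\rho\pf{f}\sigma$ is in particular an infinite, hence maximal, trace of $f$, so respectfulness of $f$ applies directly.

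The hard part, and the step I expect to be the main obstacle, is clause (a) of the $(\Leftarrow)$ direction: showing that the label $\vec{\mu}$ of a finite stuck interaction $\rho\pf{f}\sigma\Lts{\vec{\mu}}\stopA\pf{f'}\sigma'\notlts{}$ is respectful. Soundness and the vacuous non-definitely-server-inputted clause are immediate, by extending $\vec{\mu}$ to a maximal trace of $f$, which respectfulness makes sound, and noting that prefixes of sound sequences are sound; the real content is client-respectfulness, i.e.\ that the client-to-server buffer of every name is empty at $\vec{\mu}$. The difficulty is that $\vec{\mu}$ need not itself be a maximal trace of $f$, since the residual $f'$ may still offer actions that fail to fire only because the server cannot match them, so respectfulness of $f$, which constrains only maximal traces, does not apply to $\vec{\mu}$ verbatim. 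The plan is to close this gap using strictness together with the determinism of session contracts, namely that non-determinism is confined to internal output selection and each orchestration action in $\vec{\mu}$ forces the matching branch: since the client is $\stopA$, appending any client-directed action to $\vec{\mu}$ would give a trace of $f$ that strictness forces to be realizable, contradicting that the client is stuck at $\stopA$, so $f'$ offers only server-directed actions, and one argues that the maximal trace extending $\vec{\mu}$ along these actions leaves the client-to-server buffers of $\vec{\mu}$ unchanged, letting client-respectfulness of the maximal extension descend to $\vec{\mu}$. Making this descent precise, in particular ruling out a buffered client message that the orchestrator would deliver only along a server branch not taken in the stuck run, is the delicate point, and is exactly where the session restriction, as opposed to the general contract setting of Padovani, is indispensable.
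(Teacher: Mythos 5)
The paper states Proposition~\ref{prop:charcompl} \emph{without} any proof, so there is no argument of the authors to compare yours against; it has to be judged on its own. On that basis, your derivation of (iii) from (i) and (ii) is fine, your (ii)($\Rightarrow$) is fine (the realization of an infinite maximal trace from strictness on its finite prefixes is a K\"onig-style argument, which works because the orchestrated system is finitely branching and cannot perform unboundedly many consecutive $\tau$-steps), and your pruning construction for (i) is sound in outline, provided the pruning is done on the reachable configurations of $\rho\pf{f}\sigma$ (a product construction) rather than on the syntax tree of $f$, since the same residual of $f$ can recur under different client/server states with different fireable actions.

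The genuine gap is exactly the step you flag as ``delicate'' in clause (a) of (ii)($\Leftarrow$), and it cannot be closed along the lines you propose: the bridging claim that the maximal extension of $\vec{\mu}$ by server-directed actions leaves the client-to-server buffers unchanged is false. Take $\rho = \Dual{a}\Actdot\stopA$, \ $\sigma = \Dual{b_1}\Actdot\stopA \,\oplus\, \Dual{b_2}\Actdot a\Actdot\stopA$, \ $f = \orchAct{a}{\varepsilon}\Actdot\orchAct{\varepsilon}{b_2}\Actdot\orchAct{\varepsilon}{\Dual{a}}\Actdot\stopf$. Every finite trace of $f$ is realizable in $\rho\pf{f}\sigma$ along the run where $\sigma$ internally selects its $\Dual{b_2}$-branch, so $f$ is $\rho\Strictdot\sigma$ strict in the paper's global, trace-based sense; the unique maximal trace $\orchAct{a}{\varepsilon}\,\orchAct{\varepsilon}{b_2}\,\orchAct{\varepsilon}{\Dual{a}}$ is respectful, so $f$ is respectful; and every configuration from which no progress is possible has client $\stopA$, so $f:\rho\complyPds\sigma$. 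Yet along the run where $\sigma$ selects $\Dual{b_1}$, the system halts at $\stopA\pf{f'}\Dual{b_1}\Actdot\stopA$ with $\vec{\mu}=\orchAct{a}{\varepsilon}$, whose client-to-server buffer still holds the undelivered $a$ --- precisely your ``buffered client message delivered only along a server branch not taken in the stuck run.'' So the right-hand side of (ii) holds while $f:\rho\complyP\sigma$ fails, and (iii)($\Leftarrow$) fails with it, under the definitions read literally. What the equivalence actually requires is not global trace-realizability but the stronger, configuration-local property that whenever $\rho \pf{f} \sigma \Lts{\vec{\mu}} \rho' \pf{f'} \sigma' \notlts{}$ the residual $f'$ itself offers no action, so that $\vec{\mu}\in\MaxTrace{f}$ and respectfulness of $f$ applies verbatim; the paper's notion of strictness, which is all your argument has available, does not imply this. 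Hence either the proposition must be read with that stronger notion of strictness --- under which your hard case becomes immediate --- or it is false as stated; in neither case can your proof plan, or any proof, bridge the step as you set it up.
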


In order to show decidability, we provide a characterisation of respectfulness  based on the notion of buffer-aware trees and its related labelings below.

\begin{definition}[Buffer-aware trees of $f$] 
\begin{enumerate}[a)]
\item
Let $a\in\Names$. We define the {\em buffer-aware $a$-tree of an orchestrator $f$}, denoted by
$\csatree{a}{f}$, as the tree defined by induction in Figure~\ref{bufferawaretree}. The edges of the tree have a left- and a right-weight denoting,
respectively, the increment of the client-to-server and of the server-to-client buffer for the name  `a' caused by the orchestration actions performed by $f$. 

Given an edge $e$ of a buffer-aware $a$-tree $t$, we denote is left (resp.~right) weight by
${\sf lw}^t(e)$ (resp.~${\sf rw}^t(e)$).
\item
We define the {\em buffer-aware $*$-tree of an orchestrator $f$}, denoted by
$\csatree{*}{f}$, as the tree with the same nodes and edges as any $\csatree{a}{f}$, but such that the left (resp.~right) weight of an edge $e$ is $\sum_{a\in\Names}{\sf lw}^{\csatree{a}{f}}(e)$ (resp.~$\sum_{a\in\Names}{\sf rw}^{\csatree{a}{f}}(e)\ )$.
\end{enumerate}
\end{definition}

Note that the left and right weights of the edges of a  buffer-aware $*$-tree of an orchestrator $f$
are either $0$, $-1$, or $+1$.

 \begin{figure}[t]
 \hrule
 \vspace*{2mm}
 \[ \begin{array}{rcl@{\hspace*{2cm}}rcl}
\csatree{a}{\stopf} &=& \stopf 
&
\csatree{a}{x} & = & x 
	\\ [2mm]
\csatree{a}{\orchAct{\varepsilon}{\Dual{a}}\Actdot f'}
	&= &\begin{array}{c}
		\circ \\[-1mm]
		 ~~ \hspace{-2pt}\mbox{\footnotesize $0$}\mid \mbox{\footnotesize -$1$} \\ 
		\csatree{a}{f'}
	 \end{array}
&
\csatree{a}{\orchAct{a}{\varepsilon}\Actdot f'}
	&= &	\begin{array}{c}
		\circ \\[-1mm]
		\hspace{-2.3mm} \mbox{\footnotesize $+1$} \mid \mbox{\footnotesize $0$} \\ 
		\csatree{a}{f'}
	 \end{array}
	\\ [8mm]
\csatree{a}{\orchAct{\Dual{a}}{\varepsilon}\Actdot f'}
	&= &	\begin{array}{c}
		\circ \\[-1mm]
		\hspace{-1mm}\mbox{\footnotesize -$1$} \mid \mbox{\footnotesize $0$} \\ 
		\csatree{a}{f'}
	 \end{array}
&
\csatree{a}{\orchAct{\varepsilon}{a}\Actdot f'}
	&= &	\begin{array}{c}
		\circ \\[-1mm]
		 ~~~\mbox{\footnotesize $0$} \mid \mbox{\footnotesize $+1$} \\ 
		\csatree{a}{f'}
	 \end{array}
	\\ [8mm]
\csatree{a}{\mu\Actdot f'} & = &
\multicolumn{4}{l}{
	\begin{array}{c}
		\circ \\[-1mm]
		\mbox{\footnotesize $0$} \mid \mbox{\footnotesize $0$} \\ 
		\csatree{a}{f'} 
	 \end{array} 
\textrm{if } \mu\not\in \Set{ \orchAct{\varepsilon}{\Dual{a}},\orchAct{a}{\varepsilon}, \orchAct{\Dual{a}}{\varepsilon}, \orchAct{\varepsilon}{a} }
}
	\\ [8mm]
\csatree{a}{f_1\vee\ldots\vee f_n}
	&=& \begin{array}{c}
		\circ \\[-1mm]
		\diagup \ldots \diagdown\\
		\csatree{a}{f_1}\ldots\csatree{a}{f_n}
	 \end{array}
&
\csatree{a}{\rec x \procdot f'}
	&=& \begin{array}{c}
		\rec x \procdot \\[-1mm]
		 \mid\\
		\csatree{a}{f'}
	 \end{array}
 \end{array} \]
\caption{Buffer-aware $a$-tree}\label{bufferawaretree} 
 \vspace*{2mm}
 \hrule
 \end{figure}


\begin{definition}[Buffer-labelling of $\csatree{a}{f}$] 
We define the {\em buffer-labelling of} $\csatree{a}{f}$ by labelling its nodes with left and right labels as follows: 
given a node $N$ and the path $P$ in $\csatree{a}{f}$ from the root to $N$, we left-label $N$ with the sum of all the left-weights of the edges in $P$, whereas we right-label $N$ with the sum of all the right-weights of the edges in $P$.
 \end{definition}
We now provide characterisations for the properties defining respectfulness.
\begin{definition}[Sound buffer-labelling]\label{def:sound-blab} 
The buffer-labelling of $\csatree{a}{f}$ is {\em sound} whenever
\begin{enumerate}[a)]
\item \label{sb1}
there is no negative left-label and no negative right-label and
\item \label{sb2}
for any leaf $x$ and corresponding $\rec x\procdot$ node, if $k$ is the left (resp.~right) label of $x$ and $h$ is the left (resp.~right) label of $\rec x\procdot$, then: $k-h\geq 0$.
 \end{enumerate}
 \end{definition}

\begin{proposition}~~~ $f$ is sound  ~~~$\Leftrightarrow$~~~ for any $a\in\Names$, the buffer-labelling of $\csatree{a}{f}$ is sound.
 \end{proposition}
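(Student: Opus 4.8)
The plan is to prove both directions by relating the runtime soundness condition on sequences (Definition~\ref{def:respSeq}(b)) to the static labelling condition on the trees $\csatree{a}{f}$. The key observation is that a buffer-labelling of a node $N$ records exactly the buffer contents for the name $a$ after executing the sequence of orchestration actions along the path from the root to $N$: by the weight assignments in Figure~\ref{bufferawaretree}, the left-weights $+1,-1,0$ on an edge correspond precisely to the client-to-server buffer updates $\Bufcs{a}{+},\Bufcs{a}{-}$ effected by $\orchAct{a}{\varepsilon},\orchAct{\Dual{a}}{\varepsilon}$ (and $0$ otherwise), while the right-weights track the server-to-client updates from $\orchAct{\varepsilon}{a},\orchAct{\varepsilon}{\Dual{a}}$. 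Since the left-label of $N$ is the sum of left-weights along the path, I would first establish the bookkeeping lemma: for the path $P$ from the root realising a finite prefix $\mu_1\cdots\mu_n$ of some $\vec{\mu}\in\Trace{f}$, the left-label of the endpoint equals $\leftinseq _a|\EmptyBuf\mu_1\cdots\mu_n|$ and the right-label equals $\rightinseq|\EmptyBuf\mu_1\cdots\mu_n|_a$.

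For the right-to-left direction I would proceed by contraposition: assume $f$ is not sound, so some $\vec{\mu}\in\MaxTrace{f}$ violates soundness for a name $a$, i.e.\ at some prefix of length $n$ either $\leftinseq _a|\EmptyBuf\mu_1\cdots\mu_n|<0$ or $\rightinseq|\EmptyBuf\mu_1\cdots\mu_n|_a<0$. By the bookkeeping lemma this prefix corresponds to a path in $\csatree{a}{f}$ ending at a node with a negative left- or right-label, violating Definition~\ref{def:sound-blab}(\ref{sb1}). The subtlety here is that the tree is \emph{finite} whereas traces may be infinite: a prefix reaching beyond the explicit part of the tree must re-enter through a $\rec x\procdot$ node, with the leaf $x$ looping back. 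Hence the second half of the argument uses condition~(\ref{sb2}): unfolding the recursion repeats the net effect $k-h$ (the difference of labels between leaf and binder) each cycle, so if $k-h<0$ for, say, the left label, then after sufficiently many unfoldings the running buffer count goes negative, again breaking soundness. Conversely, for the left-to-right direction I would show that a negative label (violating~(\ref{sb1})) directly exhibits a non-sound prefix-trace through the bookkeeping lemma, and that a violation of~(\ref{sb2}), say $k-h<0$ on the left, yields an infinite (or long-enough finite) trace obtained by iterating the loop whose running left-count diverges to $-\infty$, hence eventually negative.

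The main obstacle I anticipate is the careful handling of the recursion/equi-recursive identification and the correspondence between paths in the finite tree and the (possibly infinite) traces in $\MaxTrace{f}$. Because the tree keeps $\rec x\procdot$ nodes and leaf variables $x$ explicit rather than fully unfolding, I must argue that every trace of $f$ factors as a path in the finite tree composed with repetitions of cycles back through matching $\rec$-binders, and that the buffer count along any such trace is the label at the corresponding tree node plus an integer multiple of the per-cycle difference $k-h$. This requires matching each leaf $x$ with its unique binding $\rec x\procdot$ node (the contractivity and no-two-consecutive-$\rec$ assumptions guarantee well-definedness) and verifying that condition~(\ref{sb2}) is exactly the invariant needed to keep the cyclic contributions non-negative in the limit. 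The remaining cases — the $\vee$-branching nodes (which split the tree but do not affect weights) and the synchronous actions $\orchAct{\alpha}{\Dual{\alpha}}$ (weight $0\mid0$) — are routine and I would dispatch them by straightforward induction on the structure of $f$ following Figure~\ref{bufferawaretree}.
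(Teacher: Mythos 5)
Your proposal is correct and takes essentially the same approach as the paper's (very terse) proof: both handle the forward direction by contraposition, splitting into the case of a negative label and the case of a leaf/binder pair with $k-h<0$, each realised as an unsound trace, and both rest the backward direction on the invariant relating node labels to running buffer contents, which you, unlike the paper, make explicit via the bookkeeping lemma and the path-plus-cycles decomposition. One expository slip worth fixing: in your right-to-left paragraph the sentence ``if $k-h<0$ then iterating breaks soundness'' argues the converse implication (it belongs to the other direction), whereas what that step actually needs --- and what your final paragraph correctly supplies --- is that under both labelling conditions the running count, being the node label plus a non-negative integer combination of the per-cycle differences (one per leaf/binder pair, not a single multiple, once recursions are nested), can never become negative.
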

 
\begin{proof}
($\Leftarrow$) By the labelling, it is impossible to get a non-client-respectful sequence out of $f$.\\
($\Rightarrow$) By contraposition; assume that for a name
$b\in\Names$, the buffer-labelling of $\csatree{b}{f}$ be unsound. Then we have two cases to consider:
\begin{enumerate}[(a)]
\item
There is a negative label. 
We then get immediately an unsound sequence.
\item
There exists a leaf $x$ and its corresponding $\rec x\procdot$ node, where $k$ is the left(or right-)-label of $x$ and $h$ is the left-(or right-)label of $\rec x \procdot $, s.t. ~$k-h < 0$. 
It is immediate to get an unsound sequence.
 \end{enumerate}
\vspace{-7mm}
 \end{proof}

We say that a node {\em gets to} $\stopf$ whenever its subtree contains a $\stopf$ node.

\begin{definition}[Client-respectful buffer-labelling]\label{def:client-res-plab} 
The buffer-labelling of $\csatree{a}{f}$ is {\em client-respectful} whenever 
\begin{enumerate}[a)]
\item\label{crb1}
any $\stopf$ node is left-labelled with $0$;
\item\label{crb2}
for any leaf $x$ and 
corresponding node of its binder $\rec x \procdot $, if $k$ is the left-label of $x$ and $h$ is the left-label of $\rec x \procdot $, then
 \begin{enumerate}[1)]
\item\label{crb1A}
if the $\rec x \procdot $ node gets to $\stopf$, then $h=k$;
\item\label{crb2A}
otherwise, if all the left-labels of the edges from $x$ to $\rec x \procdot $ are $0$ then $h=0$;
 \end{enumerate}
\item\label{crb3}
for any path from a leaf $x$ to its
corresponding $\rec x \procdot $ node, either no edge is right-weighted with $+1$ or there is at least an edge with right-weight $-1$.

 \end{enumerate}
 \end{definition}

\begin{proposition}\hfill\\
\centerline{$f$ is client-respectful  ~~~$\Leftrightarrow$~~~ for any $a\in\Names$, the buffer-labelling of $\csatree{a}{f}$ is client-respectful.}
 \end{proposition}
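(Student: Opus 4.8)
The plan is to prove the biconditional by establishing, for each name $a$, a correspondence between the three conditions of client-respectfulness in Definition~\ref{def:respSeq}(\ref{def:respSeq-iii}) and the three labelling conditions (\ref{crb1})--(\ref{crb3}) of Definition~\ref{def:client-res-plab}, working through both directions by contraposition. Since client-respectfulness of $f$ quantifies over all $a\in\Names$ and over all maximal traces $\vec{\mu}\in\MaxTrace{f}$, and since the buffer-labelling of $\csatree{a}{f}$ records exactly the accumulated left/right buffer values for the single name $a$ along each path, the key observation driving the whole argument is that paths from the root through $\csatree{a}{f}$ are in bijective correspondence with prefixes of maximal traces of $f$, with the left-label of a node equal to $\leftinseq{}_a|\EmptyBuf\,\vec{\mu}'|$ for the corresponding prefix $\vec{\mu}'$. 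I would state this correspondence as a preliminary observation and lean on it throughout.

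For the direction ($\Leftarrow$), I would argue by contraposition: suppose $f$ is not client-respectful, so some $\vec{\mu}\in\MaxTrace{f}$ violates the condition for some name $a$, i.e.\ either ($\restrl{\vec{\mu}}{a}$ is finite and $\leftinseq{}_a|\EmptyBuf\,\vec{\mu}|\neq 0$) or ($\restrl{\vec{\mu}}{a}$ is infinite and definitely-$\orchAct{a}{\varepsilon}$). In the finite-nonzero case, the trace terminates at a $\stopf$ leaf with nonzero left-label (violating (\ref{crb1})) or cycles through a $\rec x\procdot$ node that gets to $\stopf$ with $h\neq k$ (violating (\ref{crb1A})). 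In the definitely-$\orchAct{a}{\varepsilon}$ case, the infinite trace must, because $f$ is a finite regular term, eventually loop through some $\rec x\procdot$ node all of whose edges back to $x$ are left-weighted $0$ (no $\orchAct{\Dual a}{\varepsilon}$ on the cycle) while the left-label strictly increases around the loop (the $\orchAct{a}{\varepsilon}$ actions contribute $+1$), so $h\neq 0$ although all edge left-weights on the cycle are $0$, violating (\ref{crb2A}); I would also account for condition (\ref{crb3}) here, which rules out the analogous pathology on the server-to-client side where a loop accumulates $+1$ right-weights with no compensating $-1$. Each unsound-labelling witness thus yields a non-client-respectful maximal trace, giving the contrapositive of ($\Leftarrow$).

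For the direction ($\Rightarrow$), again by contraposition, I would assume the buffer-labelling of some $\csatree{a}{f}$ fails one of (\ref{crb1})--(\ref{crb3}) and construct an offending maximal trace. A $\stopf$ node with nonzero left-label gives a finite $\vec{\mu}$ with $\restrl{\vec{\mu}}{a}$ finite and $\leftinseq{}_a|\EmptyBuf\,\vec{\mu}|\neq 0$; a $\rec x\procdot$ node reaching $\stopf$ with $h\neq k$ gives a trace that first pumps the loop to create a residual imbalance and then descends to the $\stopf$, again yielding a finite nonzero trace; and a loop with $h\neq 0$ but all zero left-weights (violation of (\ref{crb2A})) or a loop accumulating right-weight with no $-1$ (violation of (\ref{crb3})) can be iterated forever to produce an infinite $\vec{\mu}$ that is definitely-$\orchAct{a}{\varepsilon}$. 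Because each of the finitely many loops in the regular tree is directly realisable by the equi-recursive unfolding of $f$, these constructions are legitimate members of $\MaxTrace{f}$.

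The main obstacle I anticipate is the bookkeeping linking the purely \emph{syntactic} loop structure of the regular tree $\csatree{a}{f}$ to the \emph{semantic} notions ``definitely-$\orchAct{a}{\varepsilon}$'' and ``$\restrl{\vec{\mu}}{a}$ infinite'': one must argue carefully that an infinite trace which is definitely-$\orchAct{a}{\varepsilon}$ necessarily factors through a cycle whose edge left-weights are all $0$ and whose net left-label increases, and conversely that iterating such a cycle produces precisely a definitely-$\orchAct{a}{\varepsilon}$ tail. This is where the finiteness/regularity of session orchestrators is essential, and it is the step that requires the most care to state rigorously rather than merely assert; the soundness-proposition proof just above is comparatively trivial because it involves no such loop analysis, and I would model the presentation of the easy direction here on that proof while devoting the bulk of the effort to the loop-pumping lemma underlying the hard direction.
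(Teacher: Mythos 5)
Your overall skeleton---a path/trace correspondence plus a loop-pumping analysis, matching each labelling clause of Definition~\ref{def:client-res-plab} to a trace pathology, with witness constructions by contraposition---is essentially the paper's approach (the paper handles ($\Rightarrow$) by contraposition through four cases, one per clause, and dismisses ($\Leftarrow$) in one line). However, your assignment of clauses to pathologies is scrambled, and this creates a genuine gap rather than a cosmetic one. Your ($\Leftarrow$) treatment of the definitely-$\orchAct{a}{\varepsilon}$ case is self-contradictory: you describe a cycle ``all of whose edges back to $x$ are left-weighted $0$'' while ``the left-label strictly increases around the loop (the $\orchAct{a}{\varepsilon}$ actions contribute $+1$)''---if the cycle contains $+1$ left-weighted edges, its edge left-weights are not all $0$, so no violation of clause~(\ref{crb2A}) is exhibited. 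The definitely-$\orchAct{a}{\varepsilon}$ pathology is the one that must be caught by clause~(\ref{crb3}) (a cycle containing an input of $a$ from the client but no delivery of $a$ to the server, i.e.\ a $+1$ edge with no compensating $-1$ for the client-to-server counter). Clause~(\ref{crb2A}) instead catches a case your ($\Leftarrow$) argument drops entirely: an \emph{infinite} trace that eventually cycles through a loop containing \emph{no} $a$-relevant action at all, so that $\restrl{\vec{\mu}}{a}$ is \emph{finite} while the residual balance $h\neq 0$; this violates the first disjunct of Definition~\ref{def:respSeq}(\ref{def:respSeq-iii}), yet your finite-restriction analysis only considers traces that reach $\stopf$ (directly or through a binder that gets to $\stopf$), so it produces no labelling violation for it. Under your own clause readings there would thus exist non-client-respectful orchestrators whose labellings pass all your checks, and ($\Leftarrow$) fails.

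Relatedly, your reading of clause~(\ref{crb3}) as ruling out ``the analogous pathology on the server-to-client side'' disconnects it from client-respectfulness altogether: a cycle accumulating inputs of $a$ \emph{from the server} ($\orchAct{\varepsilon}{a}$) without matching outputs does not violate Definition~\ref{def:respSeq}(\ref{def:respSeq-iii}) at all (the left-restriction keeps only $\orchAct{a}{\varepsilon}$ and $\orchAct{\varepsilon}{\Dual{a}}$); that pathology belongs to the separate non-definitely-server-inputted property, which the paper characterises via the $*$-tree in a different proposition. Consequently, in your ($\Rightarrow$) direction the ``violation of~(\ref{crb3})'' case, as you read it, need not yield a non-client-respectful trace, so that case of the proof collapses. (In fairness, the paper's own wording of clause~(\ref{crb3}) says ``right-weight'' and its fourth proof case garbles ``server-inputted''; but its stated conclusion---that the constructed trace is definitely an input-of-$b$-from-the-client sequence, hence not client-respectful---shows the clause is meant to constrain client-input cycles without deliveries.) Your witnesses for violations of~(\ref{crb1}) and~(\ref{crb1A}) do match the paper's first two cases, and your pumping construction for~(\ref{crb2A}) is the right witness even though you mislabel what it violates (the resulting trace has a finite $a$-restriction with nonzero balance; it is not definitely-$\orchAct{a}{\varepsilon}$); but the two clause/pathology mismatches above leave both directions of the equivalence unproven as written.
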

\begin{proof}
 ($\Leftarrow$)
By the labeling rule it is impossible to get a non client-respectful sequence out of $f$. For finite sequences this impossibility is guaranteed by clauses (\ref{crb1}) and (\ref{crb2}) of
Definition~\ref{def:client-res-plab}, for infinite ones by clause (\ref{crb3}).\\
($\Rightarrow$) 
By contraposition; assume that for a name $b\in\Names$, the buffer-labelling of $\csatree{b}{f}$ be non-client-respectful. We consider the four 
possible cases:

\begin{enumerate}

\item A label of a $\stopf$ leaf is not $0$. 
In that case we immediately get a finite sequence 
out of $f$ which is non-client-respectful.
\item There is a node $x$ labelled with $k$ and its corresponding node $\rec x \procdot $ gets to $\stopf$ and it is labelled with $h$, with $k\neq h$. 
Then the sequence out of $f$ corresponding to going to $\rec x \procdot $, then from node $\rec x \procdot $ to $x$ a non negative number of times $n$ and finally to the $\stopf$ node
cannot be client-respectful, since at the end the client-to-server buffer for $b$ would have $n*(h-k)$ elements in it. 

\item 
There is a node $x$ labelled with $k$, its corresponding node $\rec x \procdot $ does not get to $\stopf$, all the left-labels of the edges from $x$ to $\rec x \procdot $ are $0$ and $h=0$. 
In that case the trace $\vec{\mu}$ corresponding to the infinite path starting from the root and then keeping indefinitely on passing 
through $\rec x \procdot $ and $x$ is such that $\restrl{\vec{\mu}}{b}$ is finite and $|\EmptyBuf(\restrl{\vec{\mu}}{b})|\neq 0$.

\item
 there exists a path from a leaf $x$ to its corresponding $\rec x \procdot $ such that there are some right-weighted edges right-weighted with $+1$ and no edge with right-weight $-1$.
Then it is immediate to get an infinite definitely server-inputted sequence out of $f$ which is
definitely-$\orchAct{b}{\varepsilon}$ and hence not client-respectful.
 \end{enumerate}
\vspace{-6mm}
 \end{proof}

\begin{definition}[Non definitely server-inputted $*$-tree] 
Given an orchestrator $f$, its $*$-tree $\csatree{*}{f}$ is {\em\ non-definitely server-inputted} whenever, for any path from a leaf $x$ to its
corresponding $\rec x \procdot $ node, either no edge is right-weighted with $+1$ or there is at least an edge with right-weight $-1$.
 \end{definition}

\begin{proposition} ~~~ $f$ is not definitely server-inputted  ~~~$\Leftrightarrow$~~~$\csatree{*}{f}$ is non-definitely server-inputted.
 \end{proposition}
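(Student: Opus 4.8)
The plan is to mirror the two preceding propositions (for soundness and client-respectfulness of the buffer-labelling), proving both directions by contraposition where convenient and exploiting the fact already noted in the excerpt that the left/right weights of the edges of a $*$-tree are all in $\Set{-1,0,+1}$. The key observation tying the two sides together is that ``non-definitely server-inputted'' is exactly a statement about the server-to-client buffer of a \emph{single} name only through its \emph{aggregate} over all names: a maximal trace $\vec{\mu}$ is definitely-$\Set{\orchAct{\varepsilon}{a}\mid a\in\Names}$ precisely when, from some point on, every action increments \emph{some} server-to-client buffer by $+1$ and never decrements any of them. Since the right-weight of an edge of $\csatree{*}{f}$ is $\sum_{a\in\Names}{\sf rw}^{\csatree{a}{f}}(e)$, an edge is right-weighted $+1$ in the $*$-tree iff the corresponding orchestration action is some $\orchAct{\varepsilon}{a}$, and right-weighted $-1$ iff it is some $\orchAct{\varepsilon}{\Dual{a}}$; all other actions give right-weight $0$. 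This is the bridge that makes the $*$-tree the correct object.

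First I would establish the $(\Leftarrow)$ direction by contraposition: suppose $f$ \emph{is} definitely server-inputted, so some $\vec{\mu}\in\MaxTrace{f}$ is infinite and definitely-$\Set{\orchAct{\varepsilon}{a}\mid a\in\Names}$. This infinite path must, since $f$ is contractive and finite as a regular term, pass through some $\rec x\procdot$ binder and its leaf $x$ infinitely often; picking a single such cyclic segment that lies entirely within the ``definitely'' tail gives a path from a leaf $x$ to its corresponding $\rec x\procdot$ node along which every edge corresponds to an action $\orchAct{\varepsilon}{a}$, hence is right-weighted $+1$ in $\csatree{*}{f}$ and in particular none is right-weighted $-1$. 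This witnesses that $\csatree{*}{f}$ is \emph{not} non-definitely server-inputted, giving the contrapositive. Conversely, for $(\Rightarrow)$ I would again argue contrapositively: if $\csatree{*}{f}$ is not non-definitely server-inputted, there is a path from a leaf $x$ to its $\rec x\procdot$ node whose edges are all right-weighted $+1$ with none right-weighted $-1$; by contractivity this cycle can be traversed indefinitely, and the resulting infinite trace $\vec{\mu}\in\MaxTrace{f}$ has, from the entry into the cycle onward, every action of the form $\orchAct{\varepsilon}{a}$, so $\vec{\mu}$ is definitely-$\Set{\orchAct{\varepsilon}{a}\mid a\in\Names}$ and $f$ is definitely server-inputted.

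The one subtlety I would flag, and the step I expect to be the main obstacle, is the translation between cyclic paths in the (finite) tree $\csatree{*}{f}$ and genuinely infinite maximal traces of $f$. One must argue that a leaf-to-binder cycle can actually be iterated to produce an element of $\MaxTrace{f}$ (rather than being blocked by the $\vee$-branching structure or by reaching a $\stopf$ leaf prematurely), and, in the forward direction, that an arbitrary infinite ``definitely server-inputted'' trace can be localised to a single such cycle all of whose edges carry the relevant right-weight. Both hinge on the regularity of orchestrators as finite terms with guarded recursion, so that every infinite trace factors through finitely many recursion binders and some binder is revisited within the definite tail; this is the same device underlying the termination argument for $\Synth$ (Lemma~\ref{lem:termination-Synch}) and the proof of the client-respectfulness proposition, whose clause~(\ref{crb3}) is verbatim the $*$-tree condition restricted to a single name. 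Consequently the bulk of the present proposition follows by specialising those arguments, and I would present it compactly, reusing the reasoning of the client-respectful proposition rather than repeating it.
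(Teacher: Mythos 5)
Your $(\Leftarrow)$ direction is correct and is essentially the paper's (the paper disposes of it in one line), and your ``bridge'' observation about the $*$-tree is right: an edge has right-weight $+1$ exactly when its action is some $\orchAct{\varepsilon}{a}$, and $-1$ exactly when it is some $\orchAct{\varepsilon}{\Dual{a}}$. The gap is in $(\Rightarrow)$. You take the negation of the $*$-tree condition to be ``there is a leaf-to-binder path whose edges are \emph{all} right-weighted $+1$,'' but the condition being negated is ``either no edge is right-weighted $+1$ or at least one edge has right-weight $-1$,'' whose negation is only ``\emph{some} edge is right-weighted $+1$ and \emph{no} edge has right-weight $-1$.'' The remaining edges of such a path may have right-weight $0$: client inputs $\orchAct{a}{\varepsilon}$, outputs $\orchAct{\Dual{a}}{\varepsilon}$, and synchronous actions all do. If the cycle contains such an edge, iterating it produces a trace with infinitely many actions outside $\Set{\orchAct{\varepsilon}{a} \mid a\in\Names}$, which by Definition~\ref{def:respSeq} is \emph{not} definitely-$\Set{\orchAct{\varepsilon}{a} \mid a\in\Names}$; so your contrapositive yields no definitely server-inputted trace, and the implication does not go through from the hypothesis you are actually entitled to.

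This is not a presentational slip, because under the literal tree condition the implication you are proving fails outright. Take $f_0 = \rec x \procdot \orchAct{a}{\varepsilon}\Actdot \orchAct{\varepsilon}{b}\Actdot x$: its unique maximal trace alternates $\orchAct{a}{\varepsilon}$ and $\orchAct{\varepsilon}{b}$, so $f_0$ is \emph{not} definitely server-inputted; yet its single leaf-to-binder path carries right-weights $0$ and $+1$ and no $-1$, so $\csatree{*}{f_0}$ is \emph{not} non-definitely server-inputted. Hence either the tree condition must be read as ``no leaf-to-binder path has \emph{all} of its edges right-weighted $+1$'' --- which is the condition your proof tacitly uses, and under that reading both of your directions are correct, including your careful handling of the cycle-iteration subtlety --- or the proposition as literally stated is false. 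It is worth noting that the paper's own proof of $(\Rightarrow)$ stumbles at exactly the same point, in the complementary way: it states the correct negation (``some right-weighted $+1$ and no edge with right-weight $-1$'') but then asserts it is ``immediate'' to obtain a definitely server-inputted sequence, which is precisely the step that $f_0$ refutes. So your write-up should either adopt and justify the strengthened reading of the tree condition explicitly, or flag that the stated equivalence needs that reading to hold; as it stands, the inference from the paper's definition to your ``all edges $+1$'' hypothesis is the missing (and unfillable) step.
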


\begin{proof} 
($\Leftarrow$)
By the labelling it is impossible to get a definitely server-inputted sequence out of $f$.\\
($\Rightarrow$)
By contraposition; assume that $\csatree{*}{f}$ be definitely server-inputted.
So there exists a path from a leaf $x$ to its corresponding $\rec x \procdot $ such that there are some right-weighted edges right-weighted with $+1$ and no edge with right-weight $-1$.
Then it is immediate to get a definitely server-inputted sequence out of $f$.
 \end{proof}
\begin{theorem}\label{thr:respdecidability}
Orchestrator respectfulness is decidable.
 \end{theorem}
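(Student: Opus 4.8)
The plan is to reduce the question ``is $f$ respectful?'' to finitely many effectively checkable conditions on the buffer-aware trees of $f$, exploiting the three characterisation propositions already established. By Definition~\ref{def:respSeq}(\ref{def:respSeq-v}), $f$ is respectful precisely when it is sound, client-respectful, and non-definitely server-inputted, and the three propositions immediately preceding this theorem recast each of these notions as a property of buffer-labellings: soundness and client-respectfulness as properties of the labellings of the trees $\csatree{a}{f}$ for every name $a$, and non-definitely server-inputtedness as a property of the single $*$-tree $\csatree{*}{f}$. Hence it suffices to exhibit a decision procedure for each of these labelling conditions and to conjoin them.

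First I would argue that every $\csatree{a}{f}$ and $\csatree{*}{f}$ is a \emph{finite} object. Since orchestrators are contractive and are identified up to the equi-recursive equation $\rec x\procdot f = f\Subst{x}{\rec x\procdot f}$, each $f$ is a regular tree with a finite representation; the inductive clauses of Figure~\ref{bufferawaretree} turn each $\rec x\procdot$ into a single node with one child and each recursion variable $x$ into a leaf whose binder $\rec x\procdot$ is uniquely determined, so $\csatree{a}{f}$ has finitely many nodes and edges. Consequently the buffer-labelling, which assigns to each node the sum of the edge-weights along the unique root-to-node path, is computable and produces a finite integer label at each of the finitely many nodes.

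Next I would dispose of the apparent obstacle that the characterisations quantify over the infinite name set $\Names$. For any name $a$ not occurring in $f$, no orchestration action of $f$ lies in $\Set{\orchAct{\varepsilon}{\Dual{a}},\orchAct{a}{\varepsilon},\orchAct{\Dual{a}}{\varepsilon},\orchAct{\varepsilon}{a}}$, so by the last clause of Figure~\ref{bufferawaretree} every edge of $\csatree{a}{f}$ carries weight $0\mid 0$; thus all labels are $0$ and the soundness and client-respectfulness conditions hold vacuously. The universal quantifier over $\Names$ therefore collapses to a quantifier over the finite set $\Names_f$ of names actually occurring in $f$, leaving only finitely many finite labelled trees to inspect.

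Finally, each labelling condition is a finitary check over a finite labelled tree. Soundness amounts to verifying the absence of negative left- and right-labels together with the inequalities $k-h\geq 0$ for each leaf/binder pair; client-respectfulness adds the conditions on $\stopf$-labels, on leaf/binder label pairs, and on the occurrence of $+1$- and $-1$-right-weights along the (uniquely determined) leaf-to-binder paths; and non-definitely server-inputtedness is the analogous path condition on $\csatree{*}{f}$. Deciding whether a node ``gets to $\stopf$'' is a finite reachability test within its subtree, so every clause reduces to a bounded search. Conjoining the three decidable tests over the finitely many names of $\Names_f$ yields the desired decision procedure. The only genuinely delicate points are the reduction of the infinite name-quantification to $\Names_f$ and the verification that the buffer-aware trees are finite rather than infinite unfoldings; once these are secured, the remaining checks are routine finite computations.
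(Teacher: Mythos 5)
Your proposal is correct and follows essentially the same route the paper intends: the theorem is stated without an explicit proof precisely because it is meant to follow from the three characterisation propositions, which reduce soundness, client-respectfulness, and non-definitely server-inputtedness to checks on the buffer-aware trees. The two details you flag as delicate --- the finiteness of $\csatree{a}{f}$ (since the trees follow the syntactic $\rec$-structure rather than the infinite unfolding) and the collapse of the quantification over $\Names$ to the finitely many names occurring in $f$ (all other trees having only $0\mid 0$ weights) --- are exactly the points the paper leaves implicit, and your handling of them is sound.
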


From the above result and from decidability of $\complyPds$ (
Corollary~\ref{cor:decidable-complPds}) we can get decidability of  $\complyP$.
The algorithm to decide whether $\rho\complyP \sigma$ will first compute ${\cal F} = \Synth(\emptyset,\rho,\sigma)$; then if ${\cal F} \neq \emptyset$ 
it suffices to check whether there is a strict and respectful $f\in {\cal F}$, which is a decidable problem by the above.

\begin{theorem}
Given $\rho$ and $\sigma$, it is decidable whether $\rho\complyP\sigma$.
 \end{theorem}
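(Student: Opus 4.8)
The plan is to assemble the decision procedure for $\rho\complyP\sigma$ from the two decidability results already established, namely decidability of $\complyPds$ together with the ability to synthesise a representative finite set of orchestrators (Corollary~\ref{cor:decidable-complPds}), and decidability of orchestrator respectfulness (Theorem~\ref{thr:respdecidability}). The semantic characterisation that licenses this combination is Proposition~\ref{prop:charcompl}(\ref{prop:charcompl-ii}), which states that $\rho\complyP\sigma$ holds if and only if there exists an $f$ with $f:\rho\complyPds\sigma$ that is moreover respectful. So the whole task reduces to deciding whether such an $f$ exists, and the key to finiteness is that $\Synth(\emptyset,\rho,\sigma)$ returns a finite set $\mathcal{F}$ which \emph{represents} all orchestrators making $\rho$ compliant with $\sigma$ in the $\complyPds$ sense.

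First I would run the synthesis algorithm to compute $\mathcal{F}=\Synth(\emptyset,\rho,\sigma)$; this terminates by Lemma~\ref{lem:termination-Synch} and hence yields a finite set effectively. If $\mathcal{F}=\emptyset$ then, by completeness (Lemma~\ref{lem:soundness-Synth-b}) together with soundness (Lemma~\ref{lem:soundness-Synth-a}), no orchestrator makes $\rho\complyPds\sigma$, so \emph{a fortiori} none is both strict and respectful, and we answer that $\rho\not\complyP\sigma$. If $\mathcal{F}\neq\emptyset$, then for each $f\in\mathcal{F}$ I would test, using the decision procedure of Theorem~\ref{thr:respdecidability}, whether $f$ is respectful; since $\mathcal{F}$ is finite this is a finite number of decidable checks. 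Note that every $f$ returned by $\Synth$ already satisfies $f:\rho\complyPds\sigma$ with $f$ strict, by soundness of the inference system (Theorem~\ref{thm:soundeness-derinf}) and the shape of the synthesised orchestrators, so strictness need not be separately tested. We answer $\rho\complyP\sigma$ precisely when some $f\in\mathcal{F}$ passes the respectfulness test.

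The correctness argument splits into the two implications of Proposition~\ref{prop:charcompl}(\ref{prop:charcompl-ii}). If the procedure answers positively, it has exhibited a respectful $f\in\mathcal{F}$ with $f:\rho\complyPds\sigma$, whence $\rho\complyP\sigma$ by the right-to-left direction of the characterisation. Conversely, if $\rho\complyP\sigma$, the characterisation supplies some respectful strict $f^{*}$ with $f^{*}:\rho\complyPds\sigma$; by the completeness of $\Synth$ there is a $g\in\mathcal{F}$ representing $f^{*}$, and by the remark following Lemma~\ref{lem:soundness-Synth-b} this $g$ can be taken respectful, so $g$ passes the test and the procedure answers positively.

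The main obstacle is precisely the gap between an arbitrary orchestrator and the representatives returned by $\Synth$: $\mathcal{F}$ does not literally contain every $\complyPds$-orchestrator (the discussion after Lemma~\ref{lem:soundness-Synth-b} exhibits an $f\notin\Synth(\emptyset,\rho,\sigma)$ obtained by expanding a synchronous action into two asynchronous ones), so one must rely on the fact that respectfulness is preserved under the representation relation --- i.e.~that a respectful witness implies a respectful representative in $\mathcal{F}$. This preservation is exactly what the closing sentence of the synthesis section asserts, and it is what makes the restriction to the finite set $\mathcal{F}$ sound for the \emph{respectful} compliance relation rather than merely for $\complyPds$; invoking it is the crux of the decidability claim.
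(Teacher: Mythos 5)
Your proposal is correct and follows essentially the same route as the paper: compute $\mathcal{F}=\Synth(\emptyset,\rho,\sigma)$, then test the finitely many candidates for respectfulness, relying on Proposition~\ref{prop:charcompl}(\ref{prop:charcompl-ii}) and on the remark after Lemma~\ref{lem:soundness-Synth-b} that a respectful witness guarantees a respectful representative in $\mathcal{F}$. Your additional observation that strictness need not be checked separately (since it is built into the definition of $f:\rho\complyPds\sigma$ and hence follows from soundness of the synthesis) is a harmless refinement of the paper's phrasing, which asks for a ``strict and respectful'' $f\in\mathcal{F}$.
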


We conclude by observing that in \cite{Padovani10} the lack of unbounded buffering capabilities  prevents orchestrators to be used to ensure client compliance with a server that might send an unbounded number of unnecessary outputs.
To let such sort of interaction possible, in \cite{BdL14b}  
the notion of ${\tt skp}$-compliance (dubbed $\complyG$) was investigated for session contracts, where a client is compliant with a server whenever all its requests can be satisfied thanks to the possibility of discarding a (possibly unbounded) number of unnecessary server outputs.
Interactions of this sort can actually be carried out by means of our session orchestrators, since it is possible to prove that $ \begin{array}{rcl} \rho\complyG\sigma &\mbox{implies}& \rho\complyPds \sigma. \end{array}$ 
In the example in the introduction, in fact, the {\tt wind} information is unbounded and ``discarded'' by the orchestrator. 

\section{Related and future work}\label{rwc}

The notion of compliance naturally induces a substitutability relation on servers that may be used for implementing contract-based query engines (see \cite{Padovani10} for a detailed discussion). Hence it seems
worthwhile to investigate the
session sub-contract relation induced by our orchestrated compliance on session contracts.
Whereas server substitutability is at the core of the results in \cite{Padovani10}, we deem it relevant to investigate also {\em client substitutability}, in the style of what was done in \cite{BdL10,BdL13} for session contract and in \cite{BH13c} for the more general notion of contract. 

An approach to the formal description of service contracts in terms of automata
has been recently developed in \cite{BDF14}. 
The notion of {\em contract automaton} is related to that of {\em contract} as well as of {\em session contract}. 
Besides, the notion of {\em contract agreement} in \cite{BDF14} somewhat resembles that of {\em compliance}. 
In the framework of that paper, orchestrators are synthesised  to enforce contract composition to adhere to the requirements for contract agreement. 
Even if the authors of \cite{BDF14} work on the overall satisfaction in a multiparty composition of principals, it is definitely worthwhile, as a future investigation, to study the relation between the notion of orchestration, as developed in \cite{Padovani10} and in the present paper, and the approach of \cite{BDF14}, which in turn has been related in \cite{BDFT14} to the model of choreography of communicating finite state machines (CFMS) \cite{BZ83}.
For what concerns {\em session contracts} in particular, the investigation of the correspondence with the above mentioned formalisms could start from the result concerning the correspondence of {\em binary} session types with a particular two-communicating-machines subclass (see \cite{DY12} for references). 
Such a correspondence between session types and communicating machines has been pushed further to the multiparty setting in \cite{DY12}. 

Many properties of the model of CFSM which are untractable ceases to be so when Bags, instead of - or together with - FIFO queues are taken into account  \cite{CHF14}. The similarity of 
contracts and session contracts with the CFSM model suggests to investigate the use of bags for session-contract interactions  to reduce decidability problems
in our context to problems in the CFSM model with bags. What does a bag correspond to in our context is however not immediate to device. In fact, by putting a bag in between $\Dual{a}.\Dual{b}$ and $a+b$ would result in a number of possible non-deterministic evolutions of the system: as soon as $a$ is in the bag, it could be used as input for the server; or, in case both $a$ and $b$ get into the bag, the server could non-deterministically choose amongst them; etc. Such a behaviour of the system, however, goes far beyond the session setting we are in, where non-determinism is restricted to occur only inside the client and server. 


Session contracts have been also investigated in papers like \cite{BCP14,BSZ14concur} where,
overloading the name, they also have been dubbed {\em session types}. 
In \cite{BCP14} the authors establish a relation between session contracts and a model based on game-theoretic notions, showing that compliance corresponds to the existence of particular winning strategies. 
It should be interesting to investigate the meaning and role of the notion of orchestration in such a game-theoretical setting. 
\vspace{-4mm}
\paragraph{Acknowledgments.} 
We are grateful to the referees for their helpful and meaningful advices. The interaction with them has been pleasing and fruitful thanks to the forum tool provided by the workshop organisation.
We also wish to thank Mariangiola Dezani for her everlasting support.
\vspace{-4mm}

\bibliographystyle{eptcs}

\bibliography{session}

\begin{thebibliography}{10}
\providecommand{\bibitemdeclare}[2]{}
\providecommand{\surnamestart}{}
\providecommand{\surnameend}{}
\providecommand{\urlprefix}{Available at }
\providecommand{\url}[1]{\texttt{#1}}
\providecommand{\href}[2]{\texttt{#2}}
\providecommand{\urlalt}[2]{\href{#1}{#2}}
\providecommand{\doi}[1]{doi:\urlalt{http://dx.doi.org/#1}{#1}}
\providecommand{\bibinfo}[2]{#2}

\bibitemdeclare{inproceedings}{BdL10}
\bibitem{BdL10}
\bibinfo{author}{Franco \surnamestart Barbanera\surnameend} \&
  \bibinfo{author}{Ugo \surnamestart de'Liguoro\surnameend}
  (\bibinfo{year}{2010}): \emph{\bibinfo{title}{Two notions of sub-behaviour
  for session-based client/server systems}}.
\newblock In: {\sl \bibinfo{booktitle}{PPDP}}, \bibinfo{publisher}{ACM Press},
  pp. \bibinfo{pages}{155--164}, \doi{10.1145/1836089.1836109}.

\bibitemdeclare{inproceedings}{BdL14b}
\bibitem{BdL14b}
\bibinfo{author}{Franco \surnamestart Barbanera\surnameend} \&
  \bibinfo{author}{Ugo \surnamestart de' Liguoro\surnameend}
  (\bibinfo{year}{2014}): \emph{\bibinfo{title}{Loosening the notions of
  compliance and sub-behaviour in client/server systems}}.
\newblock In: {\sl \bibinfo{booktitle}{{\rm Proceedings 7th} ICE 2014}}, {\sl
  \bibinfo{series}{EPTCS}} \bibinfo{volume}{166}, pp. \bibinfo{pages}{94--110},
  \doi{10.4204/EPTCS.166.10}.

\bibitemdeclare{article}{BdL13}
\bibitem{BdL13}
\bibinfo{author}{Franco \surnamestart Barbanera\surnameend} \&
  \bibinfo{author}{Ugo \surnamestart de' Liguoro\surnameend}
  (\bibinfo{year}{2014}): \emph{\bibinfo{title}{{Sub-behaviour relations for
  session-based client/server systems}}}.
\newblock {\sl \bibinfo{journal}{Math. Struct. in Comp. Science}},
  \doi{10.1017/S096012951400005X}.
\newblock \bibinfo{note}{To appear, published online}.

\bibitemdeclare{inproceedings}{BCP14}
\bibitem{BCP14}
\bibinfo{author}{Massimo \surnamestart Bartoletti\surnameend},
  \bibinfo{author}{Tiziana \surnamestart Cimoli\surnameend} \&
  \bibinfo{author}{G.~Michele \surnamestart Pinna\surnameend}
  (\bibinfo{year}{2014}): \emph{\bibinfo{title}{A note on two notions of
  compliance}}.
\newblock In: {\sl \bibinfo{booktitle}{{\rm Proceedings 7th} ICE 2014}}, {\sl
  \bibinfo{series}{EPTCS}} \bibinfo{volume}{166}, pp. \bibinfo{pages}{86--93},
  \doi{10.4204/EPTCS.166.9}.

\bibitemdeclare{inproceedings}{BSZ14concur}
\bibitem{BSZ14concur}
\bibinfo{author}{Massimo \surnamestart Bartoletti\surnameend},
  \bibinfo{author}{Alceste \surnamestart Scalas\surnameend} \&
  \bibinfo{author}{Roberto \surnamestart Zunino\surnameend}
  (\bibinfo{year}{2014}): \emph{\bibinfo{title}{A semantic deconstruction of
  session types}}.
\newblock In: {\sl \bibinfo{booktitle}{Proc. {CONCUR}}}, pp.
  \bibinfo{pages}{402--418}, \doi{10.1007/978-3-662-44584-6$\_$28}.

\bibitemdeclare{inproceedings}{BDF14}
\bibitem{BDF14}
\bibinfo{author}{Davide \surnamestart Basile\surnameend},
  \bibinfo{author}{Pierpaolo \surnamestart Degano\surnameend} \&
  \bibinfo{author}{Gian~Luigi \surnamestart Ferrari\surnameend}
  (\bibinfo{year}{2014}): \emph{\bibinfo{title}{Automata for Analysing Service
  Contracts}}.
\newblock In: {\sl \bibinfo{booktitle}{{TGC} 2014}}, {\sl
  \bibinfo{series}{LNCS}} \bibinfo{volume}{8902}, pp. \bibinfo{pages}{34--50},
  \doi{10.1007/978-3-662-45917-1-3}.

\bibitemdeclare{inproceedings}{BDFT14}
\bibitem{BDFT14}
\bibinfo{author}{Davide \surnamestart Basile\surnameend},
  \bibinfo{author}{Pierpaolo \surnamestart Degano\surnameend},
  \bibinfo{author}{Gian-Luigi \surnamestart Ferrari\surnameend} \&
  \bibinfo{author}{Emilio \surnamestart Tuosto\surnameend}
  (\bibinfo{year}{2014}): \emph{\bibinfo{title}{From Orchestration to
  Choreography through Contract Automata}}.
\newblock In: {\sl \bibinfo{booktitle}{{\rm Proc.} ICE'14}}, {\sl
  \bibinfo{series}{EPTCS}} \bibinfo{volume}{166}, pp. \bibinfo{pages}{67--85},
  \doi{10.4204/EPTCS.166.8}.

\bibitemdeclare{inproceedings}{HennessyB12}
\bibitem{HennessyB12}
\bibinfo{author}{Giovanni \surnamestart Bernardi\surnameend} \&
  \bibinfo{author}{Matthew \surnamestart Hennessy\surnameend}
  (\bibinfo{year}{2012}): \emph{\bibinfo{title}{Modelling session types using
  contracts}}.
\newblock In: {\sl \bibinfo{booktitle}{Proceedings of 27th Annual ACM SAC
  '12}}, \bibinfo{publisher}{ACM}, \bibinfo{address}{New York, NY, USA}, pp.
  \bibinfo{pages}{1941--1946}, \doi{10.1145/2231936.2232097}.

\bibitemdeclare{article}{BH13}
\bibitem{BH13}
\bibinfo{author}{Giovanni \surnamestart Bernardi\surnameend} \&
  \bibinfo{author}{Matthew \surnamestart Hennessy\surnameend}
  (\bibinfo{year}{2014}): \emph{\bibinfo{title}{Modelling session types using
  contracts}}.
\newblock {\sl \bibinfo{journal}{Math. Struct. in Comp. Science}},
  \doi{10.1017/S0960129514000243}.
\newblock \bibinfo{note}{To appear, published online.}

\bibitemdeclare{article}{BH13c}
\bibitem{BH13c}
\bibinfo{author}{Giovanni \surnamestart Bernardi\surnameend} \&
  \bibinfo{author}{Matthew \surnamestart Hennessy\surnameend}
  (\bibinfo{year}{2015}): \emph{\bibinfo{title}{Mutually Testing Processes}}.
\newblock {\sl \bibinfo{journal}{24h CoRR}} \bibinfo{volume}{abs/1502.06360},
  \doi{10.1007/978-3-642-40184-8$\_$6}.

\bibitemdeclare{article}{BZ83}
\bibitem{BZ83}
\bibinfo{author}{Daniel \surnamestart Brand\surnameend} \&
  \bibinfo{author}{Pitro \surnamestart Zafiropulo\surnameend}
  (\bibinfo{year}{1983}): \emph{\bibinfo{title}{On Communicating Finite-State
  Machines}}.
\newblock {\sl \bibinfo{journal}{JACM}}
  \bibinfo{volume}{30}(\bibinfo{number}{2}), pp. \bibinfo{pages}{323--342},
  \doi{10.1145/322374.322380}.

\bibitemdeclare{inproceedings}{CCLP06}
\bibitem{CCLP06}
\bibinfo{author}{S.~\surnamestart Carpineti\surnameend},
  \bibinfo{author}{G.~\surnamestart Castagna\surnameend},
  \bibinfo{author}{C.~\surnamestart Laneve\surnameend} \&
  \bibinfo{author}{L.~\surnamestart Padovani\surnameend}
  (\bibinfo{year}{2006}): \emph{\bibinfo{title}{A formal account of contracts
  for {Web} {S}ervices}}.
\newblock In: {\sl \bibinfo{booktitle}{WS-FM}}, {\sl \bibinfo{series}{LNCS}}
  \bibinfo{volume}{4184}, \bibinfo{publisher}{Springer}, pp.
  \bibinfo{pages}{148--162}, \doi{10.1007/11841197\_10}.

\bibitemdeclare{article}{CGP10}
\bibitem{CGP10}
\bibinfo{author}{Giuseppe \surnamestart Castagna\surnameend},
  \bibinfo{author}{Nils \surnamestart Gesbert\surnameend} \&
  \bibinfo{author}{Luca \surnamestart Padovani\surnameend}
  (\bibinfo{year}{2009}): \emph{\bibinfo{title}{A theory of contracts for Web
  services}}.
\newblock {\sl \bibinfo{journal}{ACM Trans. on Prog. Lang. and Sys.}}
  \bibinfo{volume}{31}(\bibinfo{number}{5}), pp. \bibinfo{pages}{19:1--19:61},
  \doi{10.1145/1538917.1538920}.

\bibitemdeclare{inproceedings}{CHF14}
\bibitem{CHF14}
\bibinfo{author}{L.~\surnamestart Clemente\surnameend},
  \bibinfo{author}{F.~\surnamestart Herbreteau\surnameend} \&
  \bibinfo{author}{G.~\surnamestart Sutre\surnameend} (\bibinfo{year}{2014}):
  \emph{\bibinfo{title}{Decidable Topologies for Communicating Automata with
  {FIFO} and Bag Channels}}.
\newblock In: {\sl \bibinfo{booktitle}{Proc. CONCUR'14}}, {\sl
  \bibinfo{series}{LNCS}} \bibinfo{volume}{8704},
  \doi{10.1007/978-3-662-44584-6$\_$20}.

\bibitemdeclare{inproceedings}{DY12}
\bibitem{DY12}
\bibinfo{author}{{Pierre-Malo} \surnamestart Deni{\'e}lou\surnameend} \&
  \bibinfo{author}{Nobuko \surnamestart Yoshida\surnameend}
  (\bibinfo{year}{2012}): \emph{\bibinfo{title}{Multiparty Session Types Meet
  Communicating Automata}}.
\newblock In: {\sl \bibinfo{booktitle}{ESOP}}, pp. \bibinfo{pages}{194--213},
  \doi{10.1007/978-3-642-28869-2$\_$10}.

\bibitemdeclare{inproceedings}{honda.vasconcelos.kubo:language-primitives}
\bibitem{honda.vasconcelos.kubo:language-primitives}
\bibinfo{author}{Kohei \surnamestart Honda\surnameend},
  \bibinfo{author}{Vasco~T. \surnamestart Vasconcelos\surnameend} \&
  \bibinfo{author}{Makoto \surnamestart Kubo\surnameend}
  (\bibinfo{year}{1998}): \emph{\bibinfo{title}{Language Primitives and Type
  Disciplines for Structured Communication-based Programming}}.
\newblock In: {\sl \bibinfo{booktitle}{ESOP}}, {\sl \bibinfo{series}{LNCS}}
  \bibinfo{volume}{1381}, \bibinfo{publisher}{Springer}, pp.
  \bibinfo{pages}{22--138}, \doi{10.1007/BFb0053567}.

\bibitemdeclare{incollection}{countMachines}
\bibitem{countMachines}
\bibinfo{author}{O.~H. \surnamestart Ibarra\surnameend},
  \bibinfo{author}{J.~\surnamestart Su\surnameend},
  \bibinfo{author}{Z.~\surnamestart Dang\surnameend},
  \bibinfo{author}{T.~\surnamestart Bultan\surnameend} \&
  \bibinfo{author}{R.~\surnamestart Kemmerer\surnameend}
  (\bibinfo{year}{2000}): \emph{\bibinfo{title}{Counter Machines: Decidable
  Properties and Applications to Verification Problems}}.
\newblock In: {\sl \bibinfo{booktitle}{MFCS 2000}}, {\sl
  \bibinfo{series}{LNCS}} \bibinfo{volume}{1893},
  \doi{10.1007/3-540-44612-5-38}.

\bibitemdeclare{inproceedings}{LP07}
\bibitem{LP07}
\bibinfo{author}{Cosimo \surnamestart Laneve\surnameend} \&
  \bibinfo{author}{Luca \surnamestart Padovani\surnameend}
  (\bibinfo{year}{2007}): \emph{\bibinfo{title}{{The Must Preorder Revisited:
  An Algebraic Theory for Web Services Contracts}}}.
\newblock In: {\sl \bibinfo{booktitle}{CONCUR'07}}, {\sl
  \bibinfo{series}{LNCS}} \bibinfo{volume}{4703},
  \bibinfo{publisher}{Springer}, pp. \bibinfo{pages}{212--225},
  \doi{10.1007/978-3-540-74407-8$\_$15}.

\bibitemdeclare{article}{Padovani10}
\bibitem{Padovani10}
\bibinfo{author}{Luca \surnamestart Padovani\surnameend}
  (\bibinfo{year}{2010}): \emph{\bibinfo{title}{{C}ontract-{B}ased {D}iscovery
  of {W}eb {S}ervices {M}odulo {S}imple {O}rchestrators}}.
\newblock {\sl \bibinfo{journal}{Theoretical Computer Science}}
  \bibinfo{volume}{411}, pp. \bibinfo{pages}{3328--3347},
  \doi{10.1016/j.tcs.2010.05.002}.

\bibitemdeclare{article}{OrchChor03}
\bibitem{OrchChor03}
\bibinfo{author}{Chris \surnamestart Peltz\surnameend} (\bibinfo{year}{2003}):
  \emph{\bibinfo{title}{Web Services Orchestration and Choreography}}.
\newblock {\sl \bibinfo{journal}{Computer}}
  \bibinfo{volume}{36}(\bibinfo{number}{10}), pp. \bibinfo{pages}{46--52},
  \doi{10.1109/MC.2003.1236471}.

\end{thebibliography}

\end{document}